\newif\iffigsone
\newif\iffigstwo
\newif\iffigsthree
\figsonetrue
\figstwotrue
\figsthreetrue

\documentclass{article}
\usepackage{amsmath,amssymb,amstext,amsthm}
\usepackage{nicematrix}
\usepackage{tikz}
\usepackage{minted}
\usepackage[pdftex,pagebackref=false]{hyperref}
\usepackage{keytheorems}
\usepackage[width=.9\textwidth]{caption}
\usepackage{booktabs}
\usepackage{authblk}
\usepackage{braket}
\usepackage{nicematrix}
\usepackage[outline]{contour}
\usepackage[style=numeric,maxnames=50,backend=biber]{biblatex}
\usepackage[Tol]{colorblind}

\DeclareSourcemap{
  \maps[datatype=bibtex]{
    \map{
      \step[fieldsource=doi,final]
      \step[fieldset=url,null]
      \step[fieldset=urldate,null]
      \step[fieldset=eprint,null]
    }
    \map{
      \step[fieldsource=archivePrefix,final]
      \step[fieldset=url,null]
      \step[fieldset=urldate,null]
    }
  }
}

\hypersetup{
    plainpages=false,
    unicode=false,
    pdftoolbar=true,
    pdfmenubar=true,
    pdffitwindow=false,
    pdfstartview={FitH},
    pdfnewwindow=true,
}

\newcommand{\zerodel}{.\kern-\nulldelimiterspace}
\newcommand{\sta}[1]{\color{white}\contour{black}{$#1$}}
\newcommand{\stb}[1]{\color{black}\contour{black}{$#1$}}

\newkeytheorem{theorem}[parent=section]
\newkeytheorem{lemma}[sibling=theorem]
\newkeytheorem{definition}[sibling=theorem,style=definition]

\title{Beyond the Magic Square Game:\\Widening the Gap for Two Bell States}
\author{Tony Lau}
\date{\today}

\begin{document}

\maketitle

\begin{abstract}
Prior to this work, the largest known gap between the entangled value and the classical value for a one-round two-player nonlocal game with a perfect entangled strategy using two Bell states of entanglement was $\frac{1}{9}$, achieved by the Mermin-Peres magic square game. A larger gap of $\frac{2}{15}$ has been claimed for the related doily game of Kelleher, Roomy, and Holweck (Quant. Inf. Proc., 2024). However, this value assumes that the classical players use identical local strategies, and we show that dropping this assumption reduces the gap back to $\frac{1}{9}$. We then demonstrate that $\frac{1}{9}$ is not optimal by explicitly constructing a new nonlocal game with gap~$\frac{4}{35}$.
\end{abstract}

\section{Introduction}

The study of nonlocal games has been instrumental in the advancement of quantum information theory, having broad applications from Bell inequalities \cite{PhysicsPhysiqueFizika.1.195}\cite{PhysRev.47.777}, to quantum key distribution in cryptography \cite{PhysRevLett.98.230501}\cite{PhysRevLett.97.120405}, and to multi-prover interactive proofs in complexity theory \cite{1313847}\cite{10.1145/3485628}.

This work considers nonlocal games in the context of device-independent self-testing (see \cite{Supic2020selftestingof} and the references therein). In general, it is known that the size of Bell violations may be increased by using larger question sets \cite{Cirelson1980} or using larger answer sets \cite{Junge2010}, but these results do not constrain the amount of entanglement shared between the entangled players. It remains unknown to what extent increasing the size of the question and answer sets may improve the distinguishability between fixed amounts of entanglement. For simplicity, we require that the entangled players must win with certainty.

The Mermin-Peres magic square game \cite{PhysRevLett.65.3373}\cite{PERES1990107} is a well-studied game that distinguishes two Bell states from no entanglement. This game may be won with probability 1 only if the players do share two Bell states, while the game may be won with probability at most $\frac{8}{9}$ if they share no entanglement. Kelleher, Roomy, and Holweck introduce the \emph{doily game} \cite{Kelleher2024} that extends the symmetries of the magic square game to the full 2-qubit Pauli group, and they compute a classical value of $\frac{13}{15}$ for this game. However, their analysis contains an implicit assumption that the players use identical local strategies, which is not generally true for all classical strategies. We provide an explicit classical strategy in the proof of \Autoref{thm:ams_val} that achieves a success probability of $\frac{8}{9} > \frac{13}{15}$ for the doily game, which is the same as the classical value of magic square game. This opens the question: is the magic square game indeed optimal for distinguishing two Bell states from no entanglement?

In fact, no. In this work, we widen the entangled-classical gap for two Bell states from $\frac{1}{9}$ to $\frac{4}{35}$ by explicitly constructing a variant of the doily game augmented with synchronous questions that has a classical value of $\frac{31}{35}$ and a two-Bell-state entangled value of 1. \autoref{fig:val_table} summarizes this paper's results.

\begin{table}
    \centering
    \renewcommand{\arraystretch}{1.3}
    \begin{tabular}{c|c|c}
         Game & Classical value & Entangled value \\ \midrule
         MS game (\cite{PhysRevLett.65.3373}\cite{PERES1990107}) & $\frac{8}{9}$ & 1 \\
         Doily game (\Autoref{thm:ams_val}) & $\frac{8}{9}$ & 1 \\
         $\frac{1}{7}$-sync doily game (\Autoref{thm:psams}) & $\frac{31}{35}$ & 1 \\
    \end{tabular}
    \caption{Games considered in this work, their classical values, and their entangled values. Note that $\frac{31}{35} < \frac{8}{9}$.}
    \label{fig:val_table}
\end{table}

\section{Preliminaries} \label{sec:prel}

We define a one-round two-player nonlocal game as a cooperative game between two players, commonly referred to as Alice and Bob, who are each asked a single question and each provide a single response, and then those responses are jointly scored by a referee.
\begin{definition}
A \emph{one-round two-player nonlocal game} $G$ is defined as the tuple $G = (\mathcal{X}, \mathcal{Y}, \mathcal{A}, \mathcal{B}, \pi, V)$, in which $\mathcal{X}$ and $\mathcal{Y}$ are the sets of possible questions to Alice and Bob, respectively, $\mathcal{A}$ and $\mathcal{B}$ are the sets of possible answers from Alice and Bob respectively, $\pi: \mathcal{X} \times \mathcal{Y} \to [0, 1]$ is a probability distribution over the question sets, and $V: \mathcal{X} \times \mathcal{Y} \times \mathcal{A} \times \mathcal{B} \to \{0, 1\}$ is the referee's payoff function.
\end{definition}

We then define a classical deterministic strategy for a nonlocal game as any pair of functions, the first mapping Alice's questions to Alice's answers, and the second mapping Bob's questions to Bob's answers.
\begin{definition}
For the nonlocal game $G = (\mathcal{X}, \mathcal{Y}, \mathcal{A}, \mathcal{B}, \pi, V)$, a \emph{classical deterministic strategy} $S$ is any pair of functions $S = (A, B)$ with $A: \mathcal{X} \to \mathcal{A}$ and $B: \mathcal{Y} \to \mathcal{B}$. $A$ and $B$ will be referred to as Alice's local strategy and Bob's local strategy, respectively. Then, the probability that Alice and Bob win $G$ using $S$ is given by the expression
\begin{align}
    \sum_{\substack{x \in \mathcal{X} \\ y \in \mathcal{Y}}} \pi(x, y) V(x, y, A(x), B(y)).
\end{align}
\end{definition}

Similarly, we define an entangled strategy for a nonlocal game with local dimension $d$ as a triplet of a shared entangled state with Schmidt rank $d$, a function mapping Alice's questions to POVM measurements with Alice's answers as outcomes, and a function mapping Bob's questions to POVM measurements with Bob's answers as outcomes.
\begin{definition}
For the nonlocal game $G = (\mathcal{X}, \mathcal{Y}, \mathcal{A}, \mathcal{B}, \pi, V)$, an \emph{entangled strategy} $S$ with local dimension $d$ is any triplet $S = (\ket{\psi}, A, B)$ with $\ket{\psi} \in \mathbb{C}^{2d}$, $\{(A_{x,a}, a) : a \in \mathcal{A}\}$ being a POVM for every $x \in \mathcal{X}$, and $\{(B_{y,b}, b) : b \in \mathcal{B}\}$ being a POVM for every $y \in \mathcal{Y}$. As with classical strategies, $A$ and $B$ will be referred to as Alice's local strategy and Bob's local strategy, respectively. Then, the probability that Alice and Bob win $G$ using $S$ is given by the expression
\begin{align}
    \sum_{\substack{x \in \mathcal{X} \\ y \in \mathcal{Y}}} \sum_{\substack{a \in \mathcal{A} \\ b \in \mathcal{B}}} \pi(x, y) \Braket{\psi | A_{x,a} \otimes B_{y,b} | \psi} V(x, y, a, b).
\end{align}
\end{definition}

We may then define the \emph{classical value} and \emph{entangled value} of a game $G$ as the supremal probability of winning $G$ over all classical strategies and over all entangled strategies, respectively.

\subsection{The magic square (MS) game} \label{sec:ms}

The \emph{magic square game} \cite{PhysRevLett.65.3373}\cite{PERES1990107} is a particularly well-studied nonlocal game. The precise definition of the magic square game varies in the literature, but the core idea of the game is that Alice and Bob must output entries of a three-by-three boolean magic square such that the rows and columns obey certain parity constraints, and Alice's answer and Bob's answer must agree on the entries that are shared between their outputs (see \cite{10.1119/1.1773173}).

In this work, we consider a specific form of the magic square game. For questions, Alice and Bob each receive a row or column, chosen uniformly randomly such that Alice's row or column intersects with Bob's row or column on exactly one entry. Alice and Bob must then each produce three bits subject to two constraints:
\begin{enumerate}
    \item \emph{Parity}: if the question corresponds to a row, the parity of the three bits must be even, whereas, if the question corresponds to a column, the parity of the three bits must be odd.
    \item \emph{Consistency}: the bits corresponding to the entry that is shared between the two players must be the same.
\end{enumerate}

We define the magic square game formally as follows:
\begin{definition}
\sloppy The \emph{magic square (MS) game} is defined as the nonlocal game $([6], [6], \{0, 1\}^3, \{0, 1\}^3, \pi, R)$, in which $\pi$ is the uniform distribution over
\begin{align}
    \{(j, k) \in [6] \times [6] : ~\text{$E_j$ and $E_k$ have exactly one variable in common}\}
\end{align}
and $R: [6] \times [6] \times \{0, 1\}^3 \times \{0, 1\}^3 \to \{0, 1\}$ is the function such that $R(j, k, a, b) = 1$ if and only if $a \in \{0, 1\}^3$ satisfies the parity of $E_j$, $b \in \{0, 1\}^3$ satisfies the parity of $E_k$, and the bit corresponding to the variable shared between $E_j$ and $E_k$ in $a$ is the same as that in $b$.

For the MS game, we write $A_m(j)$ to denote Alice's bit output for $V_m$ when asked equation $E_j$, and the same for $B_m(j)$ and Bob.
\end{definition}

Naturally, the magic square cannot be satisfied with classical bit assignments, as the row parity constraints require that the parity of the entire magic square is even, while the column parity constraints require that the parity of the entire magic square is odd. At most eight out of the nine entries can be satisfied consistently, so the optimal classical strategy for the MS game wins with probability $\frac{8}{9}$.

Remarkably, entangled strategies may win the MS game with probability~1, using two Bell states. If we consider assigning operators to the entries of the magic square instead of bits, there exists an assignment of observables to the entries of the magic square such that each row and column consists of compatible observables and that measurement outcomes along each row or column satisfy that row's or that column's parity constraint. Specifically, we use triples of distinct, non-identity, mutually commuting observables from the two-qubit Pauli group $\tilde{P}_2 = \langle IX, XI, IZ, ZI \rangle$, which we refer to as \emph{commuting triples}. One set of such commuting triples is given with their associated bit-valued equations in \autoref{fig:ms_prods} and is represented diagrammatically in \autoref{fig:ms_diagr}.

\begin{table}[h]
    \centering
    {\setlength{\tabcolsep}{3pt}
    \begin{tabular}{@{\hspace{8pt}} l @{\hspace{8pt}} | @{\hspace{8pt}} lllllcc @{\hspace{9pt}} | @{\hspace{9pt}} cccccc @{\hspace{9pt}} r @{\hspace{8pt}}}
        Label & \multicolumn{7}{c @{\hspace{9pt}} | @{\hspace{9pt}}}{Boolean equation} & \multicolumn{7}{c @{\hspace{8pt}}}{Commuting triple} \\ \midrule
        \hspace{2pt} $E_0$ & $V_0$ & + & $V_1$ & + & $V_2$ & = & 0 & $XX$ & $\cdot$ & $YZ$ & $\cdot$ & $ZY$ & = & $II$ \\
        \hspace{2pt} $E_1$ & $V_3$ & + & $V_4$ & + & $V_5$ & = & 0 & $YY$ & $\cdot$ & $ZX$ & $\cdot$ & $XZ$ & = & $II$ \\
        \hspace{2pt} $E_2$ & $V_6$ & + & $V_7$ & + & $V_8$ & = & 0 & $ZZ$ & $\cdot$ & $XY$ & $\cdot$ & $YX$ & = & $II$ \\
        \hspace{2pt} $E_3$ & $V_0$ & + & $V_3$ & + & $V_6$ & = & 1 & $XX$ & $\cdot$ & $YY$ & $\cdot$ & $ZZ$ & = & $-II$ \\
        \hspace{2pt} $E_4$ & $V_1$ & + & $V_4$ & + & $V_7$ & = & 1 & $YZ$ & $\cdot$ & $ZX$ & $\cdot$ & $XY$ & = & $-II$ \\
        \hspace{2pt} $E_5$ & $V_2$ & + & $V_5$  & + & $V_8$ & = & 1 & $ZY$ & $\cdot$ & $XZ$ & $\cdot$ & $YX$ & = & $-II$
    \end{tabular}}
    \caption{The magic square game's six boolean equations and their associated commuting triples. If the variable are placed on a three by three grid, then $E_0$, $E_1$, and $E_2$ would be rows and $E_3$, $E_4$, and $E_5$ would be columns.}
    \label{fig:ms_prods}
\end{table}
\begin{figure}
    \centering
    \begin{tikzpicture}[scale=0.75]
    \coordinate (V0) at (-3,3);
    \coordinate (V1) at (0,3);
    \coordinate (V2) at (3,3);
    \coordinate (V3) at (-3,0);
    \coordinate (V4) at (0,0);
    \coordinate (V5) at (3,0);
    \coordinate (V6) at (-3,-3);
    \coordinate (V7) at (0,-3);
    \coordinate (V8) at (3,-3);
    
    \draw[ultra thick] (V0) -- (V2);
    \draw[ultra thick] (V3) -- (V5);
    \draw[ultra thick] (V6) -- (V8);
    
    \draw[red,ultra thick] (V0) -- (V6);
    \draw[red,ultra thick] (V1) -- (V7);
    \draw[red,ultra thick] (V2) -- (V8);
    
    \draw[ultra thick,fill=white] (V0) circle [radius=0.75];
    \draw[ultra thick,fill=white] (V1) circle [radius=0.75];
    \draw[ultra thick,fill=white] (V2) circle [radius=0.75];
    \draw[ultra thick,fill=white] (V3) circle [radius=0.75];
    \draw[ultra thick,fill=white] (V4) circle [radius=0.75];
    \draw[ultra thick,fill=white] (V5) circle [radius=0.75];
    \draw[ultra thick,fill=white] (V6) circle [radius=0.75];
    \draw[ultra thick,fill=white] (V7) circle [radius=0.75];
    \draw[ultra thick,fill=white] (V8) circle [radius=0.75];
    
    \node at (V0) {\large $XX$};
    \node at (V1) {\large $YZ$};
    \node at (V2) {\large $ZY$};
    \node at (V3) {\large $YY$};
    \node at (V4) {\large $ZX$};
    \node at (V5) {\large $XZ$};
    \node at (V6) {\large $ZZ$};
    \node at (V7) {\large $XY$};
    \node at (V8) {\large $YX$};
    
    \node at (-4.5,3) {\Large $E_0$};
    \node at (-4.5,0) {\Large $E_1$};
    \node at (-4.5,-3) {\Large $E_2$};
    
    \node at (-3,4.5) {\Large $E_3$};
    \node at (0,4.5) {\Large $E_4$};
    \node at (3,4.5) {\Large $E_5$};
    \end{tikzpicture}
    \caption{The nine variables (circles) and six equations (lines) of the magic square game embedded in a planar graph along with their operator solution. The black lines have parity 0 while the red lines have parity 1.}
    \label{fig:ms_diagr}
\end{figure}

Note that we can construct a family of games such that each game is logically equivalent to the MS game by varying the parity constraints of the equations in a certain way. For example, consider flipping the parity constraints for the first row and first column ($E_0$ and $E_3$). An operator solution for this modified game may be obtained from an operator solution to the original game by multiplying the operator in the first column and first row (labeled by $V_0$) by $-II$. Together, the rows now require that the total parity of the magic square is odd while the columns now require that the total parity of the magic square is even, and the optimal classical winning probability is still $\frac{8}{9}$. Indeed, as long as the total parity of the magic square as required by the rows differs from the total parity as required by the columns, the game is equivalent to the MS game \cite{arkhipov2012}. We define the family of these games, including the MS game itself, as the \emph{MS-equivalent games}.

Moreover, the existence of one operator solution to an MS-equivalent game implies that the same operators, but each with an arbitrary choice of sign, form an operator solution to another MS-equivalent game. As such, for the rest of this work, we only consider commuting triples formed from the signless elements of the Pauli quotient group, $P_2 = \tilde{P}_2 / \langle iII \rangle$. However, since $P_2$ is abelian, we will continue to consider the products of operators as if they are from $\tilde{P}_2$ in regards to commutation relations and the combined product of each triple. In doing so, each commuting triple from $P_2$ uniquely identifies a maximal abelian subgroup of $\tilde{P}_2$.

Furthermore, we refer to any set of nine distinct elements of $P_2$ that form an operator solution to an MS-equivalent game as an \emph{operator magic square}. By the symmetries of the Pauli group, it is known that every set of nine distinct operators from $P_2$ that can be organized into three rows of commuting triples intersecting with three columns of commuting triples forms an operator magic square.

\section{The doily game} \label{sec:ams}

Noticing that $P_2$ has a richer structure than the MS game utilizes, Kelleher, Roomy, and Holweck introduced the \emph{doily game} \cite{Kelleher2024} to utilize the full symmetry of $P_2$. Specifically, the operator solution to the MS game only uses six commuting triples over nine elements of $P_2$, while there are in fact 15 distinct commuting triples over all 15 non-identity elements of $P_2$.

\begin{table}[h]
    \centering
    {\setlength{\tabcolsep}{3pt}
    \begin{tabular}{@{\hspace{8pt}} l @{\hspace{8pt}} | @{\hspace{8pt}} lllllcc @{\hspace{9pt}} | @{\hspace{9pt}} cccccc @{\hspace{9pt}} r @{\hspace{8pt}}}
        Label & \multicolumn{7}{c @{\hspace{9pt}} | @{\hspace{9pt}}}{Boolean equation} & \multicolumn{7}{c @{\hspace{8pt}}}{Commuting triple} \\ \midrule
        \hspace{2pt} $E_0$ & $V_0$ & + & $V_3$ & + & $V_6$ & = & 0 & $IX$ & $\cdot$ & $XI$ & $\cdot$ & $XX$ & = & $II$ \\
        \hspace{2pt} $E_1$ & $V_1$ & + & $V_3$ & + & $V_7$ & = & 0 & $IY$ & $\cdot$ & $XI$ & $\cdot$ & $XY$ & = & $II$ \\
        \hspace{2pt} $E_2$ & $V_2$ & + & $V_3$ & + & $V_8$ & = & 0 & $IZ$ & $\cdot$ & $XI$ & $\cdot$ & $XZ$ & = & $II$ \\
        \hspace{2pt} $E_3$ & $V_0$ & + & $V_4$ & + & $V_9$ & = & 0 & $IX$ & $\cdot$ & $YI$ & $\cdot$ & $YX$ & = & $II$ \\
        \hspace{2pt} $E_4$ & $V_1$ & + & $V_4$ & + & $V_{10}$ & = & 0 & $IY$ & $\cdot$ & $YI$ & $\cdot$ & $YY$ & = & $II$ \\
        \hspace{2pt} $E_5$ & $V_2$ & + & $V_4$ & + & $V_{11}$ & = & 0 & $IZ$ & $\cdot$ & $YI$ & $\cdot$ & $YZ$ & = & $II$ \\
        \hspace{2pt} $E_6$ & $V_0$ & + & $V_5$ & + & $V_{12}$ & = & 0 & $IX$ & $\cdot$ & $ZI$ & $\cdot$ & $ZX$ & = & $II$ \\
        \hspace{2pt} $E_7$ & $V_1$ & + & $V_5$ & + & $V_{13}$ & = & 0 & $IY$ & $\cdot$ & $ZI$ & $\cdot$ & $ZY$ & = & $II$ \\
        \hspace{2pt} $E_8$ & $V_2$ & + & $V_5$ & + & $V_{14}$ & = & 0 & $IZ$ & $\cdot$ & $ZI$ & $\cdot$ & $ZZ$ & = & $II$ \\
        \hspace{2pt} $E_9$ & $V_6$ & + & $V_{11}$ & + & $V_{13}$ & = & 0 & $XX$ & $\cdot$ & $YZ$ & $\cdot$ & $ZY$ & = & $II$ \\
        \hspace{2pt} $E_{10}$ & $V_{10}$ & + & $V_{12}$ & + & $V_8$ & = & 0 & $YY$ & $\cdot$ & $ZX$ & $\cdot$ & $XZ$ & = & $II$ \\
        \hspace{2pt} $E_{11}$ & $V_{14}$ & + & $V_7$ & + & $V_9$ & = & 0 & $ZZ$ & $\cdot$ & $XY$ & $\cdot$ & $YX$ & = & $II$ \\
        \hspace{2pt} $E_{12}$ & $V_6$ & + & $V_{10}$ & + & $V_{14}$ & = & 1 & $XX$ & $\cdot$ & $YY$ & $\cdot$ & $ZZ$ & = & $-II$ \\
        \hspace{2pt} $E_{13}$ & $V_{11}$ & + & $V_{12}$ & + & $V_7$ & = & 1 & $YZ$ & $\cdot$ & $ZX$ & $\cdot$ & $XY$ & = & $-II$ \\
        \hspace{2pt} $E_{14}$ & $V_{13}$ & + & $V_8$  & + & $V_9$ & = & 1 & $ZY$ & $\cdot$ & $XZ$ & $\cdot$ & $YX$ & = & $-II$
    \end{tabular}}
    \caption{The doily game's 15 boolean equations and their associated commuting triples. Note that the last six equations share an identical structure with \autoref{fig:ms_prods}.}
    \label{fig:ams_prods}
\end{table}
\begin{figure}
    \centering
    \begin{tikzpicture}[scale=0.75]
    \coordinate (V1) at (90:3);
    \coordinate (V3) at (162:3);
    \coordinate (V6) at (234:3);
    \coordinate (V14) at (306:3);
    \coordinate (V5) at (18:3);
    
    \coordinate (V10) at (270:2);
    \coordinate (V2) at (342:2);
    \coordinate (V13) at (54:2);
    \coordinate (V7) at (126:2);
    \coordinate (V0) at (198:2);
    
    \coordinate (V4) at (270:5);
    \coordinate (V8) at (342:5);
    \coordinate (V11) at (54:5);
    \coordinate (V9) at (126:5);
    \coordinate (V12)  at (198:5);
    
    \draw[ultra thick] (V1) -- (V3);
    \draw[ultra thick] (V3) -- (V6);
    \draw[ultra thick] (V14) -- (V5);
    \draw[ultra thick] (V5) -- (V1);
    
    \draw[ultra thick] (V1) -- (V4);
    \draw[ultra thick] (V3) -- (V8);
    \draw[ultra thick] (V6) -- (V11);
    \draw[ultra thick] (V14) -- (V9);
    \draw[ultra thick] (V5) -- (V12);
    
    \draw[ultra thick] (V4) -- (V11);
    \draw[ultra thick] (V12) -- (V8);
    \draw[ultra thick] (V9) -- (V4);
    
    \draw[red,ultra thick] (V6) -- (V14);
    \draw[red,ultra thick] (V11) -- (V12);
    \draw[red,ultra thick] (V8) -- (V9);
    
    \draw[ultra thick,fill=white] (V1) circle [radius=0.75];
    \draw[ultra thick,fill=white] (V3) circle [radius=0.75];
    \draw[ultra thick,fill=white] (V6) circle [radius=0.75];
    \draw[ultra thick,fill=white] (V14) circle [radius=0.75];
    \draw[ultra thick,fill=white] (V5) circle [radius=0.75];
    \draw[ultra thick,fill=white] (V10) circle [radius=0.75];
    \draw[ultra thick,fill=white] (V2) circle [radius=0.75];
    \draw[ultra thick,fill=white] (V13) circle [radius=0.75];
    \draw[ultra thick,fill=white] (V7) circle [radius=0.75];
    \draw[ultra thick,fill=white] (V0) circle [radius=0.75];
    \draw[ultra thick,fill=white] (V4) circle [radius=0.75];
    \draw[ultra thick,fill=white] (V8) circle [radius=0.75];
    \draw[ultra thick,fill=white] (V11) circle [radius=0.75];
    \draw[ultra thick,fill=white] (V9) circle [radius=0.75];
    \draw[ultra thick,fill=white] (V12) circle [radius=0.75];
    
    \node at (V1) {\large $IY$};
    \node at (V3) {\large $XI$};
    \node at (V6) {\large $XX$};
    \node at (V14) {\large $ZZ$};
    \node at (V5) {\large $ZI$};
    \node at (V10) {\large $YY$};
    \node at (V2) {\large $IZ$};
    \node at (V13) {\large $ZY$};
    \node at (V7) {\large $XY$};
    \node at (V0) {\large $IX$};
    \node at (V4) {\large $YI$};
    \node at (V8) {\large $XZ$};
    \node at (V11) {\large $YZ$};
    \node at (V9) {\large $YX$};
    \node at (V12) {\large $ZX$};
    \end{tikzpicture}
    \caption{The 15 variables (circles) and 15 equations (lines) of the doily game embedded in a planar graph along with their operator solution. Each circle has exactly three lines intersecting it, though not all intersections pass through a circle's center. The black lines have parity 0 while the red lines have parity 1.}
    \label{fig:ams_diagr}
\end{figure}

To construct the doily game, we may transform the commuting triples of $P_2$ and their products into a boolean linear system of 15 equations and 15 variables, as shown in \autoref{fig:ams_prods} and \autoref{fig:ams_diagr}. In particular, we associate each of the 15 non-identity elements of the Pauli quotient group with a variable, $V_0$ to $V_{14}$, and each commuting triple with an equation, $E_0$ to $E_{14}$, such that the sum of the variables in the equation is given by the product of the associated commuting triple.

As with the MS game, the referee uniformly randomly samples a pair of equations that intersect on a single variable, then sends one to Alice and the other to Bob. Like the MS game again, Alice and Bob each output three bits, winning if the parities of their outputs match that of the equations they were asked, and if their outputs are consistent on the variable that they have in common.

Formally, we define the doily game as follows:
\begin{definition}
The \emph{doily game} is defined as the nonlocal game $([15], \allowbreak [15], \allowbreak \{0, 1\}^3, \allowbreak \{0, 1\}^3, \allowbreak \pi, \allowbreak R)$, in which $\pi$ is the uniform distribution over
\begin{align}
    \{(j, k) \in [15] \times [15] : ~\text{$E_j$ and $E_k$ have exactly one variable in common}\}
\end{align}
and $R: [15] \times [15] \times \{0, 1\}^3 \times \{0, 1\}^3 \to \{0, 1\}$ is the function such that $R(j, k, a, b) = 1$ if and only if $a \in \{0, 1\}^3$ satisfies the parity of $E_j$, $b \in \{0, 1\}^3$ satisfies the parity of $E_k$, and the bit corresponding to the variable shared between $E_j$ and $E_k$ in $a$ is the same as that in $b$.

As with the MS game, we write $A_m(j)$ to denote Alice's bit output for $V_m$ when asked equation $E_j$, and the same for $B_m(j)$ and Bob.
\end{definition}

By our construction, the full set of 15 commuting triples of $P_2$ is naturally an operator solution to the doily game. Thus, just like the MS game, a perfect entangled strategy for the doily game is obtained by having the players share two Bell states and measure according to the operators of the commuting triples associated with the asked equations in \autoref{fig:ams_prods}.

However, the doily game on its own is insufficient to improve on the $\frac{1}{9}$ gap of the MS game, which we shall show in \autoref{sec:classical_doily}.

\subsection{Basic structure}

It turns out that the group structure of $P_2$ is intimately connected to finite projective geometry. To be precise, $P_2$ is isomorphic to $\mathbb{Z}_2^4$, which is in turn isomorphic to the projective space $PG(3, 2)$ \cite{marceaux2019}. However, recall that we are only interested in the subgroups of $P_2$ that correspond to commuting triples, as only those subgroups represent compatible measurements. As such, we may see in \autoref{fig:tet1} that our 15 commuting triples may be embedded as a substructure of the Fano tetrahedron commonly used to depict $PG(3, 2)$. Furthermore, we may see in \autoref{fig:tet2} that the original MS game appears as a sub-substructure of the Fano tetrahedron.

\iffigstwo
\begin{figure}[h]
    \centering
    \begin{tikzpicture}[line join=bevel,rotate around x=-40,rotate around z=-50,scale=0.75]
    \coordinate (V8) at (-3,-3,-3);
    \coordinate (V14) at (3,3,-3);
    \coordinate (V11) at (3,-3,3);
    \coordinate (V0) at (-3,3,3);
    \coordinate (V4) at (0,0,-3);
    \coordinate (V5) at (0,-3,0);
    \coordinate (V7) at (-3,0,0);
    \coordinate (V3) at (3,0,0);
    \coordinate (V13) at (0,3,0);
    \coordinate (V10) at (0,0,3);
    \coordinate (V2) at (1,-1,-1);
    \coordinate (V9) at (-1,1,-1);
    \coordinate (V12) at (-1,-1,1);
    \coordinate (V6) at (1,1,1);
    \coordinate (V1) at (0,0,0);
    
    \draw[draw=none,fill opacity=0.7,fill=black!30] (V8) -- (V14) -- (V11) -- cycle;
    
    \draw[line width=1pt] (V8) -- (V3);
    \draw[red,line width=1pt] (V8) -- (V13);
    \draw[line width=1pt] (V14) -- (V5);
    \draw[line width=1pt] (V14) -- (V7);
    \draw[line width=1pt] (V11) -- (V4);
    \draw[line width=1pt] (V0) -- (V4);
    
    \draw[line width=1.25pt] (V4) -- (V10);
    \draw[line width=1.25pt] (V5) -- (V13);
    \draw[line width=1.25pt] (V7) -- (V3);
    
    \fill (V4) circle (4pt);
    \fill (V2) circle (4pt);
    \fill (V9) circle (4pt);
    \fill (V1) circle (4pt);
    
    \draw[draw=none,fill opacity=0.75,fill=gray] (V8) -- (V11) -- (V0) -- cycle;
    \draw[draw=none,fill opacity=0.75,fill=gray!60] (V14) -- (V11) -- (V0) -- cycle;
    
    \draw[line width=1.5pt] (V8) -- (V10);
    \draw[red,line width=1.5pt] (V14) -- (V10);
    \draw[red,line width=1.5pt] (V11) -- (V7);
    \draw[line width=1.5pt] (V11) -- (V13);
    \draw[line width=1.5pt] (V0) -- (V5);
    \draw[line width=1.5pt] (V0) -- (V3);
    
    \fill (V8) circle (4pt);
    \fill (V14) circle (4pt);
    \fill (V11) circle (4pt);
    \fill (V0) circle (4pt);
    \fill (V5) circle (4pt);
    \fill (V7) circle (4pt);
    \fill (V3) circle (4pt);
    \fill (V13) circle (4pt);
    \fill (V10) circle (4pt);
    \fill (V12) circle (4pt);
    \fill (V6) circle (4pt);
    \end{tikzpicture}
    \caption{The 15 variables (points) and 15 equations (lines) of the doily game embedded in the Fano tetrahedron. 12 lines run along the surface while three lines cross the interior, connecting opposite edges through the centroid. The black lines have parity 0 while the red lines have parity 1.}
    \label{fig:tet1}
\end{figure}

\begin{figure}
    \centering
    \begin{tikzpicture}[line join=bevel,rotate around x=-40,rotate around z=-50,scale=0.75]
    \coordinate (V0) at (-3,-3,-3);
    \coordinate (V1) at (3,3,-3);
    \coordinate (V2) at (3,-3,3);
    \coordinate (V3) at (-3,3,3);
    \coordinate (V4) at (0,0,-3);
    \coordinate (V5) at (0,-3,0);
    \coordinate (V6) at (-3,0,0);
    \coordinate (V7) at (3,0,0);
    \coordinate (V8) at (0,3,0);
    \coordinate (V9) at (0,0,3);
    \coordinate (V10) at (1,-1,-1);
    \coordinate (V11) at (-1,1,-1);
    \coordinate (V12) at (-1,-1,1);
    \coordinate (V13) at (1,1,1);
    \coordinate (V14) at (0,0,0);
    
    \draw[draw=none,fill opacity=0.7,fill=black!30] (V0) -- (V1) -- (V2) -- cycle;
    
    \draw[red,line width=1pt] (V0) -- (V8);
    \draw[line width=1pt] (V1) -- (V6);
    
    \fill (V11) circle (4pt);
    
    \draw[draw=none,fill opacity=0.75,fill=gray] (V0) -- (V2) -- (V3) -- cycle;
    \draw[draw=none,fill opacity=0.75,fill=gray!60] (V1) -- (V2) -- (V3) -- cycle;
    
    \draw[line width=1.5pt] (V0) -- (V9);
    \draw[red,line width=1.5pt] (V1) -- (V9);
    \draw[red,line width=1.5pt] (V2) -- (V6);
    \draw[line width=1.5pt] (V2) -- (V8);
    
    \fill (V0) circle (4pt);
    \fill (V1) circle (4pt);
    \fill (V2) circle (4pt);
    \fill (V6) circle (4pt);
    \fill (V8) circle (4pt);
    \fill (V9) circle (4pt);
    \fill (V12) circle (4pt);
    \fill (V13) circle (4pt);
    \end{tikzpicture}
    \caption{The magic square game from \autoref{fig:ms_diagr} as a substructure of \autoref{fig:tet1}.}
    \label{fig:tet2}
\end{figure}
\fi

We formally state several observations about the commuting triples of $P_2$:

\begin{theorem} \label{thm:trip_props}
The following statements are all true:
\begin{enumerate}
    \item $P_2$ has 15 commuting triples.
    \item Every non-identity element of $P_2$ is a member of three commuting triples.
    \item For each commuting triple, there are exactly six other commuting triples with which it shares exactly one element.
    \item There are 90 ordered pairs of commuting triples that intersect on exactly one variable.
    \item $P_2$ has 10 operator magic squares.
\end{enumerate}
\end{theorem}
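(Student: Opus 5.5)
We identify $P_2$ with $\mathbb{Z}_2^4$ as in the previous subsection, so that the group product becomes vector addition. Commutation becomes the vanishing of the standard symplectic form $\omega$, where $\omega(p,q)=0$ if $p$ and $q$ commute and $1$ otherwise. Under this identification, a commuting triple $\{p,q,pq\}$ is exactly the set of non-zero vectors of a $2$-dimensional totally isotropic subspace. All five statements then reduce to counting facts about these subspaces, and we prove them in order, each resting on the previous ones.

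\textbf{Statements 2 and 1.} Fix a non-identity $p$. The set of elements commuting with $p$ is the kernel $p^\perp$ of the non-zero linear functional $\omega(p,\cdot)$, so $p^\perp$ is $3$-dimensional. Since $\omega$ is alternating, $p\in p^\perp$. Hence $p^\perp$ contains exactly $6$ non-identity elements other than $p$. Every commuting triple through $p$ has the form $\{p,q,pq\}$ with $q\in p^\perp\setminus\{II,p\}$. These six elements split into the three pairs $\{q,pq\}$, which gives exactly three commuting triples through $p$. This proves statement 2. Double counting incidences then gives $15\cdot 3/3=15$ triples, which is statement 1.

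\textbf{Statements 3 and 4.} Two distinct commuting triples share at most one element, because any two distinct elements $a,b$ of a triple determine it as $\{a,b,ab\}$. Let $T=\{a,b,ab\}$. By statement 2, each element of $T$ lies in exactly two other triples. None of these triples can be counted twice, since a triple containing two elements of $T$ would equal $T$. So exactly $3\cdot 2=6$ other triples meet $T$, each in exactly one element, which is statement 3. Statement 4 follows as $15\cdot 6=90$ ordered pairs.

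\textbf{Statement 5.} This is the step requiring the most care. An operator magic square is determined by its three rows and three columns. The rows are pairwise disjoint triples, and each row meets each column in exactly one element. By the remark at the end of \autoref{sec:ms}, it suffices to count such $3\times 3$ configurations. We do this by counting ordered pairs of disjoint triples and showing that each such pair extends to exactly one square.

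First, by statements 1 and 3, a triple $T$ is disjoint from $15-1-6=8$ triples, so there are $120$ ordered disjoint pairs $(L_1,L_2)$.

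Next, we claim each $a\in L_1$ commutes with exactly one element of $L_2$. Write $W$ for the $2$-dimensional span of $L_2$. Since $a^\perp$ is $3$-dimensional, $a^\perp\cap W$ is non-zero. It is not all of $W$: otherwise $\langle a\rangle+W$ would be a $3$-dimensional totally isotropic subspace, which is impossible for a non-degenerate form on $\mathbb{Z}_2^4$. Moreover $a\notin W$, because $L_1$ and $L_2$ are disjoint. So $a^\perp\cap W$ is $1$-dimensional and contains exactly one element of $L_2$. This matches $L_1=\{a_1,a_2,a_3\}$ bijectively with $L_2=\{b_1,b_2,b_3\}$, yielding three forced columns $\{a_i,b_i,a_ib_i\}$.

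It remains to show that the third entries form a row. The elements $a_ib_i$ satisfy $a_1b_1\cdot a_2b_2=(a_1a_2)(b_1b_2)=a_3b_3$. For commutation, expand
\begin{align}
\omega(a_1b_1,a_2b_2)=\omega(a_1,b_2)+\omega(b_1,a_2)=1+1=0,
\end{align}
using that $\omega(a_i,a_j)=\omega(b_i,b_j)=0$ and the uniqueness of the matching. We also check that the nine elements are distinct, by the same uniqueness argument. Thus every ordered disjoint pair extends to exactly one square.

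Conversely, each square contains $3\cdot 2=6$ ordered pairs of rows and $6$ ordered pairs of columns, and each of these is an ordered disjoint pair that regenerates the same square. This gives $120/12=10$ operator magic squares.
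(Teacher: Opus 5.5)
Your argument is correct, up to two small points noted below, and it takes a genuinely different route from the paper.

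\textbf{How the two routes differ.} The paper proves statements 1--3 by inspecting the explicit table of the 15 triples and the tetrahedron figure. It derives 4 from 1 and 3 exactly as you do. It proves 5 by exhibiting the ten squares (the one in the magic square figure plus nine in the appendix), asserting rather than arguing that the list is complete.

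You instead work in $\mathbb{Z}_2^4$ with the symplectic form and derive everything from two facts: $p^\perp$ is a 3-dimensional subspace containing $p$, and no totally isotropic subspace has dimension 3. The second fact gives your key lemma: an element off a triple commutes with exactly one element of it. This is the generalized quadrangle axiom for $W(3,2)$. From it, every disjoint pair of triples extends to a unique square, and the count $15\cdot 8/12=10$ follows.

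\textbf{What each route buys.} Your route actually proves the ``exactly'' in statements 1 and 5, i.e.\ that the table and the appendix list are complete. It needs no coordinates or case-checking, and it adapts to the $P_d$ generalization raised in the conclusion. The paper's enumeration instead gives the squares concretely in terms of $E_0,\dots,E_{14}$. That is what one uses to check, e.g., the magic-square-first sampling procedure, where each intersecting pair must lie in the same number of squares.

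\textbf{Two small points to tighten.}

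First, the paper defines an operator magic square as a \emph{set} of nine elements. You therefore need the nine-element set to determine its six lines; otherwise your $10$ only bounds the number of such sets from above. This follows from your tools. A commuting triple inside the set that is neither a row nor a column must have its three elements in distinct rows and distinct columns. After relabelling, the entries in positions $(1,1)$, $(1,2)$, $(2,2)$ then pairwise commute. They span a 3-dimensional totally isotropic subspace, because $(2,2)$ lies outside the plane of row 1, which is impossible.

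Second, distinctness of the nine elements comes from $\mathrm{span}(L_1)\cap\mathrm{span}(L_2)=0$, not from uniqueness of the matching. For example, $a_ib_i=a_jb_j$ with $i\neq j$ forces $a_ia_j=b_ib_j\in L_1\cap L_2=\emptyset$. Also, bijectivity of the matching needs your commutation lemma applied with the roles of $L_1$ and $L_2$ swapped, which you should state.
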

\begin{proof}
For statements 1 through 3, it is straightforward to check \autoref{fig:ams_prods} and \autoref{fig:tet1}. Statement 4 follows from 1 and 3. For statement 5, the operator magic square in \autoref{fig:ms_diagr} and the nine in \autoref{appx:squares} form the complete set.
\end{proof}

Using the above structure of the commuting triples, we can prove that the question distribution for the doily game may be achieved by four equivalent sampling procedures, which shall be useful in subsequent analysis of the doily game:
\begin{definition} \label{def:procs}
We define four sampling procedures:
\begin{enumerate}
    \item \emph{Flat}: The referee uniformly samples an ordered pair of intersecting equations to send to Alice and Bob.
    \item \emph{Equation-first}: The referee uniformly samples one equation to send to Alice, then samples one intersecting equation to send to Bob.
    \item \emph{Variable-first}: The referee uniformly samples a variable to test, then samples one equation containing the variable to send to Alice, and finally samples one of the other equations containing the variable to send to Bob.
    \item \emph{Magic-square-first}: The referee uniformly samples a set of six intersecting equations that forms one of the 10 MS-equivalent games, then samples an ordered pair of intersecting equations according to the MS-equivalent game's question distribution to send to Alice and Bob.
\end{enumerate}
\end{definition}
\begin{lemma}
The four sampling procedures in \Autoref{def:procs} are equivalent, and each ordered pair is selected with probability $\frac{1}{90}$.
\end{lemma}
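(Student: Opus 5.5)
The plan is to compute, for each of the four procedures in \Autoref{def:procs}, the probability that a fixed ordered pair $(E_j, E_k)$ of equations intersecting on exactly one variable is selected, and to check that this probability is $\frac{1}{90}$ in every case. Since all four procedures only output such pairs, and there are 90 of them by \Autoref{thm:trip_props}(4), this shows that each procedure is the uniform distribution on that set. In particular, all four coincide with $\pi$.

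\emph{Flat.} This is immediate: the procedure is uniform over 90 ordered pairs, so each pair has probability $\frac{1}{90}$.

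\emph{Equation-first.} Alice's equation is chosen with probability $\frac{1}{15}$ by \Autoref{thm:trip_props}(1). Bob's equation is then one of exactly six intersecting equations by \Autoref{thm:trip_props}(3). This gives $\frac{1}{15}\cdot\frac{1}{6}=\frac{1}{90}$.

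\emph{Variable-first.} Because $E_j$ and $E_k$ share exactly one variable $V_m$, the pair can arise only if $V_m$ is the sampled variable, which happens with probability $\frac{1}{15}$. By \Autoref{thm:trip_props}(2), $V_m$ lies in exactly three equations. So $E_j$ is chosen with probability $\frac{1}{3}$, and $E_k$ is then chosen from the remaining two with probability $\frac{1}{2}$. The total is $\frac{1}{15}\cdot\frac13\cdot\frac12=\frac{1}{90}$.

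\emph{Magic-square-first.} This is the main step. Within a single MS-equivalent game, each row intersects each column in exactly one entry, and two rows (or two columns) are disjoint. The question distribution is therefore uniform over the $3\cdot3\cdot2=18$ ordered row/column pairs. So a given ordered pair has probability $\frac{1}{18}$ in each of the 10 squares containing both equations as intersecting lines, and probability $0$ in the others.

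It then remains to show that every intersecting ordered pair lies in exactly $c$ of the 10 operator magic squares, with $c$ independent of the pair. I would do this by double counting: the 10 squares contribute $10\cdot 18=180$ ordered intersecting pairs in total, spread over 90 ordered pairs. Hence $c=2$ on average, and uniformity gives $c=2$ exactly.

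To prove that $c$ is constant, the first route is a symmetry argument. The symplectic group $Sp(4,2)\cong S_6$ acts on $P_2$ preserving commutation, so it permutes commuting triples and operator magic squares. I would verify that it acts transitively on ordered pairs of triples meeting in one point, which amounts to transitivity on "flags" of the generalized quadrangle $GQ(2,2)$ together with a second line through the point. If that is awkward to justify, the fallback is a direct check against \autoref{fig:ms_diagr} and \autoref{appx:squares}: each unordered intersecting pair (45 of them) appears in exactly two of the listed squares.

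Granting $c=2$, the probability of a given pair is $\frac{1}{10}\cdot 2\cdot\frac{1}{18}=\frac{1}{90}$. This completes the equivalence of all four procedures.
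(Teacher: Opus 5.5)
Your proposal is correct, and it is the natural expansion of the paper's proof, which consists only of the sentence that the lemma ``follows from \Autoref{thm:trip_props}.'' Your flat, equation-first and variable-first computations are exactly the counting that citation points to.

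Where you go beyond the paper is the magic-square-first case. Statements 1--5 of \Autoref{thm:trip_props}, via your double count $10\cdot 18 = 180 = 2\cdot 90$, only show that an intersecting ordered pair lies in two of the ten squares \emph{on average}. Uniformity requires this multiplicity to be constant. That fact is not among the listed statements, and you were right to flag it and supply it.

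Either of your two justifications works:
\begin{itemize}
\item \emph{Symmetry.} $\mathrm{Sp}(4,2)\cong S_6$ acts transitively on ordered pairs of concurrent lines of $GQ(2,2)$. In the duad--syntheme model, the stabilizer $S_2\wr S_3$ of a syntheme permutes the six synthemes meeting it transitively.
\item \emph{Direct check.} This goes through against the ten squares of \autoref{fig:ms_diagr} and \autoref{appx:squares}. Each of the 45 unordered intersecting pairs occurs as a row--column pair in exactly two of them. For example, $\{E_9,E_{12}\}$ occurs only in the original square and in the first square of the appendix.
\end{itemize}

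So your write-up is a complete version of the paper's argument rather than a different route. It supplies the one ingredient the terse proof leaves implicit.
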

\begin{proof}
The proof follows from \Autoref{thm:trip_props}.
\end{proof}

\subsection{Classical value} \label{sec:classical_doily}

Unfortunately, it turns out that the doily game by itself is not sufficient to widen the gap from the MS game, as the doily game also has classical value $\frac{8}{9}$. However, this setback shall be overcome by noticing a shared structure between all of the optimal classical strategies for the doily game.

\begin{theorem} \label{thm:ams_val}
The doily game has classical value $\frac{8}{9}$.
\end{theorem}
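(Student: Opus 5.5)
The proof has two halves: an upper bound of $\frac{8}{9}$ and an explicit strategy achieving it. For the upper bound I use the magic-square-first sampling procedure from \Autoref{def:procs}, which the lemma following it shows is equivalent to the doily game's distribution.

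Fix any classical deterministic strategy $(A,B)$ and restrict it to the six equations of one of the 10 MS-equivalent subgames. That restriction is a classical strategy for an MS-equivalent game, and every such game has classical value $\frac{8}{9}$ (as recalled in \autoref{sec:ms}, citing \cite{arkhipov2012}). So, conditioned on the sampled magic square, the players win with probability at most $\frac{8}{9}$. Averaging over the 10 squares gives an overall winning probability of at most $\frac{8}{9}$. Nothing here assumes the players' local strategies are identical, and randomized strategies are convex combinations of deterministic ones, so the bound covers them too.

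For the lower bound I construct an asymmetric strategy, since symmetric strategies apparently top out at $\frac{13}{15}$. The plan is as follows.
\begin{itemize}
\item Take a global assignment $v \in \{0,1\}^{15}$ of the variables.
\item Let Alice answer every equation with the bits of $v$, except on the set $F_A$ of equations that $v$ violates. On each $E_j \in F_A$ she flips one chosen variable so that the parity holds.
\item Let Bob do the same, with his own repair choices on the equations in $F_B$.
\end{itemize}
A question pair $(E_j,E_k)$ sharing variable $V_m$ is lost exactly when Alice's bit for $V_m$ on $E_j$ differs from Bob's bit for $V_m$ on $E_k$.

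Choose $v$ to violate few equations. The sum of all 15 equations has each variable appearing three times, so left sides sum to the total parity of $v$ times 3, i.e.\ to $\sum v$ mod 2; the right sides sum to 3, which is odd. So this gives no global contradiction, but each of the 10 magic squares forces at least one violation. The aim is to arrange violations and flips so that inconsistencies occur on exactly 10 of the 90 ordered pairs, one per magic square under the magic-square-first decomposition. That yields $\frac{80}{90} = \frac{8}{9}$.

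A natural concrete attempt is to let Bob answer exactly according to $v$ when possible and let Alice carry all the repairs. With flipped variable $V_m$ in a violated equation $E_j$, the losses are the pairs $(E_j,E_k)$ with $E_k \ni V_m$, $k \neq j$. That is 2 pairs per repair, plus analogous losses for Bob's repairs. So I want a $v$ with, for example, 5 violated equations, each repaired once, giving 10 lost pairs.

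The main obstacle is finding this explicit $v$ and the repair choices, and checking the count. I would search the small space (e.g.\ by computer or by exploiting the tetrahedral symmetry of \autoref{fig:tet1}), then verify that exactly 10 of the 90 ordered pairs are lost.
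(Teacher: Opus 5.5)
Your upper bound is the paper's argument (the magic-square-first decomposition plus the $\frac{8}{9}$ value of every MS-equivalent game), and it is fine. The gap is in the other half. You never exhibit a strategy achieving $\frac{8}{9}$, and the accounting you give for why one should exist is wrong.

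If $v$ violates $E_j$, Bob cannot answer $E_j$ according to $v$: he would fail the parity check on every question where he receives $E_j$. So \emph{both} players must repair every violated equation, and ``Alice carries all the repairs'' is infeasible. Now suppose no variable is flipped, by either player, in two different equations. Then each violated equation costs exactly 4 lost ordered pairs, whether or not the two players flip the same variable in it. Every $v$ violates at least 3 equations (by the 12-of-15 satisfiability bound), so this gives at least 12 losses, i.e.\ only $\frac{13}{15}$, the symmetric value. Your figure of 10 losses for 5 violated equations ignores Bob's 5 repairs, which would naively double it to 20.

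The missing idea is cancellation between violated equations that share a variable. Suppose Alice flips $V_m$ in $E_j$ and Bob flips the same $V_m$ in a different equation $E_k$, and let $E_\ell$ be the third equation containing $V_m$. Then the pair $(E_j,E_k)$ is won, and $V_m$ costs only the two pairs $(E_j,E_\ell)$ and $(E_\ell,E_k)$. So you want the violated equations to form a cycle through shared variables, with Alice repairing one way around the cycle and Bob the other way.

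Concretely, let $v_{13}=v_{14}=1$ and $v_m=0$ otherwise. This $v$ violates exactly $E_7,E_8,E_{11},E_{13},E_9$. Cyclically consecutive equations in this list share $V_5,V_{14},V_7,V_{11},V_{13}$ respectively.
\begin{itemize}
\item Alice repairs $E_7,E_8,E_{11},E_{13},E_9$ by flipping $V_{13},V_5,V_{14},V_7,V_{11}$ respectively.
\item Bob repairs the same equations by flipping $V_5,V_{14},V_7,V_{11},V_{13}$ respectively.
\end{itemize}
Each of these five variables is flipped exactly once by each player, in different equations, so it loses exactly 2 pairs. Every other variable is answered as $v$ by both players on all three of its equations. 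That gives $10$ losses out of $90$, i.e.\ $\frac{8}{9}$.

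This is exactly the strategy the paper found by computer search and tabulates. With it, your proof would be complete, and the cycle structure explains the paper's count of 10 losing cases rather than leaving it as a search output.
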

\begin{proof}
We prove the theorem in two parts: first, by showing that a success probability greater than $\frac{8}{9}$ is unattainable, and, second, by showing that a success probability of $\frac{8}{9}$ is attainable.
\begin{figure}[h]
    \centering
    \begin{tikzpicture}[scale=0.5]
    \coordinate (V1) at (90:3);
    \coordinate (V3) at (162:3);
    \coordinate (V6) at (234:3);
    \coordinate (V14) at (306:3);
    \coordinate (V5) at (18:3);
    
    \coordinate (V10) at (270:2);
    \coordinate (V2) at (342:2);
    \coordinate (V13) at (54:2);
    \coordinate (V7) at (126:2);
    \coordinate (V0) at (198:2);
    
    \coordinate (V4) at (270:5);
    \coordinate (V8) at (342:5);
    \coordinate (V11) at (54:5);
    \coordinate (V9) at (126:5);
    \coordinate (V12)  at (198:5);
    
    \draw[ultra thick] (V1) -- (V3);
    \draw[ultra thick] (V3) -- (V6);
    \draw[ultra thick] (V14) -- (V5);
    \draw[ultra thick] (V5) -- (V1);
    
    \draw[ultra thick] (V1) -- (V4);
    \draw[ultra thick] (V3) -- (V8);
    \draw[ultra thick] (V6) -- (V11);
    \draw[ultra thick] (V14) -- (V9);
    \draw[ultra thick] (V5) -- (V12);
    
    \draw[ultra thick] (V4) -- (V11);
    \draw[ultra thick] (V12) -- (V8);
    \draw[ultra thick] (V9) -- (V4);
    
    \draw[red,ultra thick] (V6) -- (V14);
    \draw[red,ultra thick] (V11) -- (V12);
    \draw[red,ultra thick] (V8) -- (V9);
    
    \draw[ultra thick,fill=white] (V1) circle [radius=0.75];
    \draw[ultra thick,fill=white] (V3) circle [radius=0.75];
    \draw[ultra thick,fill=white] (V6) circle [radius=0.75];
    \draw[ultra thick,fill=white] (V14) circle [radius=0.75];
    \draw[ultra thick,fill=white] (V5) circle [radius=0.75];
    \draw[ultra thick,fill=white] (V10) circle [radius=0.75];
    \draw[ultra thick,fill=white] (V2) circle [radius=0.75];
    \draw[ultra thick,fill=white] (V13) circle [radius=0.75];
    \draw[ultra thick,fill=white] (V7) circle [radius=0.75];
    \draw[ultra thick,fill=white] (V0) circle [radius=0.75];
    \draw[ultra thick,fill=white] (V4) circle [radius=0.75];
    \draw[ultra thick,fill=white] (V8) circle [radius=0.75];
    \draw[ultra thick,fill=white] (V11) circle [radius=0.75];
    \draw[ultra thick,fill=white] (V9) circle [radius=0.75];
    \draw[ultra thick,fill=white] (V12) circle [radius=0.75];
    
    \node at (V1) {\large 0};
    \node at (V3) {\large 0};
    \node at (V6) {\large 0};
    \path (V14) ++(180:0.48) node {\tiny 1};
    \path (V14) ++(126:0.48) node {\tiny 0};
    \path (V14) ++(72:0.48) node {\tiny 1};
    \path (V5) ++(144:0.48) node {\tiny 0};
    \path (V5) ++(198:0.48) node {\tiny 0};
    \path (V5) ++(252:0.48) node {\tiny 1};
    \node at (V10) {\large 0};
    \node at (V2) {\large 0};
    \path (V13) ++(54:0.48) node {\tiny 0};
    \node at (V13) {\tiny 1};
    \path (V13) ++(234:0.48) node {\tiny 1};
    \path (V7) ++(126:0.48) node {\tiny 0};
    \node at (V7) {\tiny 0};
    \path (V7) ++(306:0.48) node {\tiny 1};
    \node at (V0) {\large 0};
    \node at (V4) {\large 0};
    \node at (V8) {\large 0};
    \path (V11) ++(198:0.48) node {\tiny 0};
    \path (V11) ++(234:0.48) node {\tiny 1};
    \path (V11) ++(270:0.48) node {\tiny 0};
    \node at (V9) {\large 0};
    \node at (V12) {\large 0};
    \end{tikzpicture}
    \hspace{0.25cm}
    \begin{tikzpicture}[scale=0.5]
    \coordinate (V1) at (90:3);
    \coordinate (V3) at (162:3);
    \coordinate (V6) at (234:3);
    \coordinate (V14) at (306:3);
    \coordinate (V5) at (18:3);
    
    \coordinate (V10) at (270:2);
    \coordinate (V2) at (342:2);
    \coordinate (V13) at (54:2);
    \coordinate (V7) at (126:2);
    \coordinate (V0) at (198:2);
    
    \coordinate (V4) at (270:5);
    \coordinate (V8) at (342:5);
    \coordinate (V11) at (54:5);
    \coordinate (V9) at (126:5);
    \coordinate (V12)  at (198:5);
    
    \draw[ultra thick] (V1) -- (V3);
    \draw[ultra thick] (V3) -- (V6);
    \draw[ultra thick] (V14) -- (V5);
    \draw[ultra thick] (V5) -- (V1);
    
    \draw[ultra thick] (V1) -- (V4);
    \draw[ultra thick] (V3) -- (V8);
    \draw[ultra thick] (V6) -- (V11);
    \draw[ultra thick] (V14) -- (V9);
    \draw[ultra thick] (V5) -- (V12);
    
    \draw[ultra thick] (V4) -- (V11);
    \draw[ultra thick] (V12) -- (V8);
    \draw[ultra thick] (V9) -- (V4);
    
    \draw[red,ultra thick] (V6) -- (V14);
    \draw[red,ultra thick] (V11) -- (V12);
    \draw[red,ultra thick] (V8) -- (V9);
    
    \draw[ultra thick,fill=white] (V1) circle [radius=0.75];
    \draw[ultra thick,fill=white] (V3) circle [radius=0.75];
    \draw[ultra thick,fill=white] (V6) circle [radius=0.75];
    \draw[ultra thick,fill=white] (V14) circle [radius=0.75];
    \draw[ultra thick,fill=white] (V5) circle [radius=0.75];
    \draw[ultra thick,fill=white] (V10) circle [radius=0.75];
    \draw[ultra thick,fill=white] (V2) circle [radius=0.75];
    \draw[ultra thick,fill=white] (V13) circle [radius=0.75];
    \draw[ultra thick,fill=white] (V7) circle [radius=0.75];
    \draw[ultra thick,fill=white] (V0) circle [radius=0.75];
    \draw[ultra thick,fill=white] (V4) circle [radius=0.75];
    \draw[ultra thick,fill=white] (V8) circle [radius=0.75];
    \draw[ultra thick,fill=white] (V11) circle [radius=0.75];
    \draw[ultra thick,fill=white] (V9) circle [radius=0.75];
    \draw[ultra thick,fill=white] (V12) circle [radius=0.75];
    
    \node at (V1) {\large 0};
    \node at (V3) {\large 0};
    \node at (V6) {\large 0};
    \path (V14) ++(180:0.48) node {\tiny 1};
    \path (V14) ++(126:0.48) node {\tiny 1};
    \path (V14) ++(72:0.48) node {\tiny 0};
    \path (V5) ++(144:0.48) node {\tiny 1};
    \path (V5) ++(198:0.48) node {\tiny 0};
    \path (V5) ++(252:0.48) node {\tiny 0};
    \node at (V10) {\large 0};
    \node at (V2) {\large 0};
    \path (V13) ++(54:0.48) node {\tiny 1};
    \node at (V13) {\tiny 0};
    \path (V13) ++(234:0.48) node {\tiny 1};
    \path (V7) ++(126:0.48) node {\tiny 0};
    \node at (V7) {\tiny 1};
    \path (V7) ++(306:0.48) node {\tiny 0};
    \node at (V0) {\large 0};
    \node at (V4) {\large 0};
    \node at (V8) {\large 0};
    \path (V11) ++(198:0.48) node {\tiny 1};
    \path (V11) ++(234:0.48) node {\tiny 0};
    \path (V11) ++(270:0.48) node {\tiny 0};
    \node at (V9) {\large 0};
    \node at (V12) {\large 0};
    \end{tikzpicture}
    \caption{Planar graphs representing the classical strategy for the doily game that wins with probability $\frac{8}{9}$. Alice's strategy is on the left and Bob's strategy is on the right. Each small number is positioned along the equation it is output for, with the exception of the centered small numbers being output for the equation passing through the circle's center.}
    \label{fig:classical_doily}
\end{figure}
\begin{table}[h]
    \centering
    {
    \NiceMatrixOptions{columns-width=0.15cm,cell-space-limits=0.15cm}
    \begin{align*}
    \begin{NiceArray}{c|ccc}
        E_0 & \sta{V_0} & \sta{V_3} & \sta{V_6} \\
        E_1 & \sta{V_1} & \sta{V_3} & \sta{V_7} \\
        E_2 & \sta{V_2} & \sta{V_3} & \sta{V_8} \\
        E_3 & \sta{V_0} & \sta{V_4} & \sta{V_9} \\
        E_4 & \sta{V_1} & \sta{V_4} & \sta{V_{10}} \\
        E_5 & \sta{V_2} & \sta{V_4} & \sta{V_{11}} \\
        E_6 & \sta{V_0} & \sta{V_5} & \sta{V_{12}} \\
        E_7 & \sta{V_1} & \sta{V_5} & \sta{V_{13}} \\
        E_8 & \sta{V_2} & \stb{V_5} & \stb{V_{14}} \\
        E_9 & \sta{V_6} & \stb{V_{11}} & \stb{V_{13}} \\
        E_{10} & \sta{V_{10}} & \sta{V_{12}} & \sta{V_8} \\
        E_{11} & \sta{V_{14}} & \sta{V_7} & \sta{V_9} \\
        E_{12} & \sta{V_6} & \sta{V_{10}} & \stb{V_{14}} \\
        E_{13} & \sta{V_{11}} & \sta{V_{12}} & \stb{V_7} \\
        E_{14} & \stb{V_{13}} & \sta{V_8} & \sta{V_9} \\
    \end{NiceArray} \hspace{2.5cm}
    \begin{NiceArray}{c|ccc}
        E_0 & \sta{V_0} & \sta{V_3} & \sta{V_6} \\
        E_1 & \sta{V_1} & \sta{V_3} & \sta{V_7} \\
        E_2 & \sta{V_2} & \sta{V_3} & \sta{V_8} \\
        E_3 & \sta{V_0} & \sta{V_4} & \sta{V_9} \\
        E_4 & \sta{V_1} & \sta{V_4} & \sta{V_{10}} \\
        E_5 & \sta{V_2} & \sta{V_4} & \sta{V_{11}} \\
        E_6 & \sta{V_0} & \sta{V_5} & \sta{V_{12}} \\
        E_7 & \sta{V_1} & \stb{V_5} & \stb{V_{13}} \\
        E_8 & \sta{V_2} & \sta{V_5} & \sta{V_{14}} \\
        E_9 & \sta{V_6} & \sta{V_{11}} & \sta{V_{13}} \\
        E_{10} & \sta{V_{10}} & \sta{V_{12}} & \sta{V_8} \\
        E_{11} & \stb{V_{14}} & \stb{V_7} & \sta{V_9} \\
        E_{12} & \sta{V_6} & \sta{V_{10}} & \stb{V_{14}} \\
        E_{13} & \stb{V_{11}} & \sta{V_{12}} & \sta{V_7} \\
        E_{14} & \stb{V_{13}} & \sta{V_8} & \sta{V_9} \\
    \end{NiceArray}
    \end{align*}
    }%
    \caption{Tables representing the classical strategy for the doily game that wins with probability $\frac{8}{9}$. Alice's strategy is on the left and Bob's strategy is on the right. A contoured white variable label indicates that the variable is output as 0 for that equation while a solid black variable label indicates that the variable is output as 1 for that equation.}
    \label{tab:classical_doily}
\end{table}
\begin{enumerate}
    \item Assume towards contradiction that the strategy $S$ wins the doily game with probability $p > \frac{8}{9}$. By the magic-square-first procedure from \Autoref{def:procs}, there must be some MS-equivalent game that $S$ wins with probability greater than $\frac{8}{9}$. This contradicts the fact that the classical values of the MS-equivalent games are all $\frac{8}{9}$, proving that the assumption that there exists a classical strategy that wins the doily game with probability greater than $\frac{8}{9}$ is false.
    \item We provide an example of a classical strategy that wins with probability $\frac{8}{9}$ as diagrams in \autoref{fig:classical_doily} and as tables in \autoref{tab:classical_doily}, found via computational search. Since the players only lose when one player outputs 1 for a variable that the other outputs as 0, we may simply count that there are 10 of these cases in total. Following statement 4 of \Autoref{thm:trip_props}, there are 90 possible pairs in total, so the players win with probability $\frac{8}{9}$.
\end{enumerate}
\end{proof}

\subsection{Additional properties}

As mentioned above, it turns out that all optimal classical strategies for the doily game exhibit a certain property, which can be exploited to further augment the doily game into a third game that ultimately has a classical value less than $\frac{8}{9}$. In particular, all optimal classical strategies for the doily game are \emph{asymmetric}, which we define below.

\begin{definition}
A classical strategy $S = (A, B)$ is said to be \emph{symmetric}\footnote{Not to be confused with the term \emph{synchronous}, which will also be used later.} if $A = B$. Otherwise, $S$ is said to be \emph{asymmetric}.
\end{definition}

To show that all optimal doily game strategies are asymmetric, we find that the best symmetric strategies win with probability less than $\frac{8}{9}$.

\begin{theorem} \label{thm:opt_sym}
An optimal symmetric strategy for the doily game wins with probability $\frac{13}{15} < \frac{8}{9}$.
\end{theorem}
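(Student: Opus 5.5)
We may assume every equation's answer satisfies its parity. If a player's answer to $E_j$ has the wrong parity, every pair involving $E_j$ is already lost, so replacing it by any parity-correct triple cannot lower the success probability. A symmetric strategy is then a single function $A$ that gives each equation $E_j$ a parity-satisfying triple; both players answer according to $A$.

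\textbf{Step 1: counting losses.} We use the variable-first procedure of \Autoref{def:procs}. Each variable $V_m$ lies in exactly three equations (\Autoref{thm:trip_props}), and these give the six ordered question pairs that test $V_m$. Together these account for all $90$ pairs. Call $V_m$ \emph{inconsistent} if the three values $A_m(j)$ over its equations are not all equal. A consistent variable loses none of its six pairs. An inconsistent variable has one odd value out, so exactly $4$ of its six ordered pairs are lost. Hence a symmetric strategy with $k$ inconsistent variables wins with probability exactly $1-\frac{4k}{90}$. Since $\frac{13}{15}=1-\frac{12}{90}$, the theorem reduces to showing that the minimum of $k$ is $3$.

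\textbf{Step 2: upper bound (construction).} Assign $0$ to every variable in every equation, except in the three parity-1 equations. In those, output a single $1$: $V_{14}$ in $E_{12}$, $V_7$ in $E_{13}$ and $V_{13}$ in $E_{14}$. These three variables are distinct, and each of them appears in only one of $E_{12},E_{13},E_{14}$. So every equation is satisfied and only these three variables are inconsistent, giving $k=3$ and winning probability $\frac{13}{15}$.

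\textbf{Step 3: lower bound, via the 10 operator magic squares.} Suppose $C$ is the set of inconsistent variables. Every consistent variable has a single global value. If some MS-equivalent subgame avoided $C$ entirely, this global assignment would satisfy all six of its equations. That is impossible, since no MS-equivalent game has a classical solution. Hence $C$ must meet every one of the $10$ magic squares.

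Double-counting over the squares gives the needed incidences:
\begin{itemize}
\item Each square has $9$ of the $15$ variables, so each variable lies in $10\cdot 9/15=6$ squares.
\item Each square contains $6\cdot 3=18$ commuting pairs of variables, and $P_2$ has $45$ commuting pairs. So each commuting pair lies in $180/45=4$ squares.
\item Each square contains $36-18=18$ non-commuting pairs, and $P_2$ has $105-45=60$ of them. So each non-commuting pair lies in $180/60=3$ squares.
\end{itemize}
Uniformity of these counts follows from the symmetry of $P_2$, or can be checked directly against the list in \autoref{appx:squares}. One variable meets only $6<10$ squares. Any two variables together meet at most $6+6-3=9<10$ squares. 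Hence $|C|\ge 3$.

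\textbf{Main obstacle.} The delicate step is justifying these uniform pair counts, since a naive argument gives only $k\ge 2$. The plan is to verify them via the transitivity of the symplectic group on commuting and on non-commuting pairs, falling back to direct inspection of the ten squares if needed. With $k\ge 3$ established, the optimal symmetric strategy wins with probability $1-\frac{12}{90}=\frac{13}{15}<\frac{8}{9}$.
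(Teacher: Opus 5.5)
Your proof is correct, but it reaches the key bound $k\ge 3$ by a genuinely different route from the paper. The paper shares your parity reduction and your Step~1: Lemma~\ref{thm:sym_prob} with $q=15-k$ gives $\frac{15+2q}{45}=1-\frac{4k}{90}$. It then excludes $k\le 2$ using the majority value of each variable. With at most two $\frac{2}{3}$-consistent variables, this fixed assignment satisfies at least 13 equations, contradicting Theorem~\ref{thm:eqn_sat}, which the paper proves only by exhaustive computer search. Its construction likewise starts from a computer-found assignment satisfying 12 equations. You instead show that the inconsistent variables must meet every one of the ten MS-equivalent subgames and that no two variables can. You also give an explicit optimal strategy: since $E_{12},E_{13},E_{14}$ are pairwise disjoint, any choice of one variable from each works. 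Your argument is computer-free and explains why $3$ is the threshold. The same counting (each equation lies in $4$ of the $10$ squares, so two unsatisfied equations miss some square) even gives a human proof of Theorem~\ref{thm:eqn_sat}. The cost is the incidence data, which the paper never needs, but it is correct: in the list of \autoref{appx:squares}, each variable is absent from exactly $4$ squares, matching your counts. You also use one fact silently: two grid elements sharing no row or column anticommute. This holds because otherwise they, together with the element where one's row meets the other's column, would be pairwise commuting and linearly independent, spanning a $3$-dimensional isotropic subspace of $\mathbb{Z}_2^4$. You can bypass the exact pair counts $4$ and $3$ entirely. $Sp(4,2)$ permutes the squares and is transitive on commuting pairs and on non-commuting pairs. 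The square of \autoref{fig:ms_diagr} contains none of $IX, XI, IY$, so it misses both a commuting pair and a non-commuting pair. Hence every pair of variables misses some square.
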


We follow the argument of \cite{Kelleher2024} using the language developed in this work. We first consider the form of symmetric strategies for the doily game.

Let $S$ be a symmetric strategy for the doily game and $m$ denote the index of some variable. There are only two possible cases for how $S$ responds with the assignment of $V_m$: that all three equations containing $V_m$ assign the same bit to $V_m$, and that two of the equations assign the same bit to $V_m$ while the third assigns the opposite bit to $V_m$. As these two cases are quite important, we refer to them as follows:

\begin{definition} \label{def:consist}
When a symmetric strategy $S$ for the doily game outputs the same bit for variable $V_m$ for each of the three equations that include $V_m$, we refer to $S$ having a \emph{consistent assignment} for $V_m$. Otherwise, we refer to $S$ having a $\frac{2}{3}$-\emph{consistent assignment} for $V_m$.
\end{definition}

Breaking down symmetric strategies in this way yields a simple expression for the winning probabilities of doily game strategies that satisfy all of the game's parity constraints.

\begin{lemma} \label{thm:sym_prob}
Let $S$ be a symmetric strategy for the doily game that satisfies all parity constraints with $q$ variables having consistent assignments and $(15 - q)$ variables having $\frac{2}{3}$-consistent assignments. Then, $S$ wins the doily game with probability $\frac{15+2q}{45}$.
\end{lemma}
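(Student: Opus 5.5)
Since $S$ satisfies every parity constraint, a round is lost only through inconsistency. So the winning probability is the probability that Alice and Bob give the same bit for the shared variable. I will compute this using the variable-first sampling procedure of \Autoref{def:procs}, which by the lemma following it gives the same uniform distribution over the 90 ordered pairs. In that procedure the referee picks a variable $V_m$ uniformly from the 15 variables. He then picks an ordered pair of distinct equations among the three equations containing $V_m$ (statement 2 of \Autoref{thm:trip_props}), uniformly over the $3\cdot 2 = 6$ ordered pairs. Because $S$ is symmetric, Alice and Bob both use the same function. They therefore agree on $V_m$ exactly when the two chosen equations assign the same bit to $V_m$ under that common function.

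I then condition on the type of $V_m$ from \Autoref{def:consist}. If $V_m$ has a consistent assignment, all three equations assign the same bit, and every one of the 6 ordered pairs wins, giving conditional probability $1$. If $V_m$ has a $\frac{2}{3}$-consistent assignment, two equations agree and the third disagrees. Only the 2 ordered pairs formed by the two agreeing equations win, so the conditional probability is $\frac{2}{6}=\frac{1}{3}$. Averaging over the uniformly chosen variable gives
\begin{align}
\frac{1}{15}\left(q\cdot 1 + (15-q)\cdot\frac{1}{3}\right) = \frac{3q + 15 - q}{45} = \frac{15+2q}{45}.
\end{align}

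The only step needing care is justifying that the variable-first procedure really reproduces $\pi$ and that each intersecting pair meets in exactly one variable. Both are supplied by \Autoref{thm:trip_props} and the lemma on the four sampling procedures. Beyond citing these, I expect no genuine obstacle, since the rest is a direct count.
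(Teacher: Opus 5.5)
Your proposal is correct and follows essentially the same route as the paper's proof. Like the paper, you use the variable-first sampling procedure, condition on whether the tested variable is consistent (win probability $1$) or $\frac{2}{3}$-consistent (win probability $\frac{1}{3}$), and average to get $\frac{15+2q}{45}$.
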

\begin{proof}
By the variable-first procedure from \Autoref{def:procs}, the referee tests a variable that has a consistent assignment according to $S$ with probability $\frac{q}{15}$ and tests a variable that has a $\frac{2}{3}$-consistent assignment according to $S$ with probability $\frac{15-q}{15}$.
\begin{enumerate}
    \item Suppose that the referee tests a variable with a consistent assignment. By the left-hand side of \autoref{fig:consist}, $S$ always passes the consistency check, so the players win with probability 1.
    \item Suppose that the referee tests a variable with a $\frac{2}{3}$-consistent assignment. Without loss of generality, the players' strategy is of the form shown on the right-hand side of \autoref{fig:consist}. Since the referee may ask each of the six equation pairs with equal probability, the players win with probability $\frac{1}{3}$.
\end{enumerate}
\iffigsone
\begin{figure}[h]
    \centering
    \begin{tikzpicture}[scale=1.75]
    \node (Vm) at (0,1.25) {\Large $V_m$ is consistent};
    
    \node (Ej) at (-1,0.5) {\Large $E_j$};
    \node (Ek) at (0,0.5) {\Large $E_k$};
    \node (El) at (1,0.5) {\Large $E_\ell$};
    
    \node (A) at (-1.5,0) {\Large $A$};
    \node (B) at (-1.5,-1) {\Large $B$};
    
    \node (Aj) at (-1,0) {$b$};
    \node (Ak) at (0,0) {$b$};
    \node (Al) at (1,0) {$b$};
    \node (Bj) at (-1,-1) {$b$};
    \node (Bk) at (0,-1) {$b$};
    \node (Bl) at (1,-1) {$b$};

    \draw[very thick,blue] (Aj) -- (Bk);
    \draw[very thick,blue] (Aj) -- (Bl);
    \draw[very thick,blue] (Ak) -- (Bj);
    \draw[very thick,blue] (Ak) -- (Bl);
    \draw[very thick,blue] (Al) -- (Bj);
    \draw[very thick,blue] (Al) -- (Bk);
    \end{tikzpicture}
    ~~~
    \begin{tikzpicture}[scale=1.75]
    \node (Vm) at (0,1.25) {\Large $V_m$ is $\frac{2}{3}$-consistent};
    
    \node (Ej) at (-1,0.5) {\Large $E_j$};
    \node (Ek) at (0,0.5) {\Large $E_k$};
    \node (El) at (1,0.5) {\Large $E_\ell$};
    
    \node (A) at (-1.5,0) {\Large $A$};
    \node (B) at (-1.5,-1) {\Large $B$};
    
    \node (Aj) at (-1,0) {$b$};
    \node (Ak) at (0,0) {$\overline{b}$};
    \node (Al) at (1,0) {$\overline{b}$};
    \node (Bj) at (-1,-1) {$b$};
    \node (Bk) at (0,-1) {$\overline{b}$};
    \node (Bl) at (1,-1) {$\overline{b}$};

    \draw[very thick,red] (Aj) -- (Bk);
    \draw[very thick,red] (Aj) -- (Bl);
    \draw[very thick,red] (Ak) -- (Bj);
    \draw[very thick,red] (Al) -- (Bj);
    \draw[very thick,blue] (Ak) -- (Bl);
    \draw[very thick,blue] (Al) -- (Bk);
    \end{tikzpicture}
    \caption{Alice and Bob's possible responses for variable $V_m$ when their strategy gives it a consistent assignment (left) and $\frac{2}{3}$-consistent assignment (right). Each line represents a pair of equations intersecting on $V_m$ that either passes (blue) or fails (red) the consistency check.}
    \label{fig:consist}
\end{figure}
\fi

Thus, the unconditional probability that the players win the doily game using $S$ is given by
\begin{align}
    \frac{q}{15} \cdot 1 + \frac{15-q}{15} \cdot \frac{1}{3} = \frac{15+2q}{45}.
\end{align}
\end{proof}

Additionally, we use computational exhaustive search to determine the maximal number of the 15 doily game equations that we may satisfy by assigning a fixed bit to each variable:

\begin{theorem} \label{thm:eqn_sat}
A bit assignment $b: [15] \to \{0, 1\}$ may satisfy at most 12 out of the 15 doily game equations.
\end{theorem}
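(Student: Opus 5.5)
The statement has two halves. For the upper bound we need that no assignment satisfies 13 or more of the 15 equations. For tightness we need one assignment satisfying exactly 12. We treat the system over $\mathbb{F}_2$ as $Mb = c$. Here $M$ is the $15\times 15$ incidence matrix of \autoref{fig:ams_prods}, and $c$ has 1s exactly at $E_{12},E_{13},E_{14}$.

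\textbf{Tightness.} Take $b \equiv 0$. Every parity-0 equation $E_0,\dots,E_{11}$ holds and the three parity-1 equations fail, so exactly 12 are satisfied.

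\textbf{Upper bound.} Suppose $b$ violates exactly a set $F$ of equations. Then $Mb = c + 1_F$, so $c + 1_F$ lies in the column space of $M$. Equivalently, $c + 1_F$ is orthogonal to every vector $w$ in the left kernel of $M$. Each such $w$ is a set $W$ of equations in which every variable appears an even number of times. The condition then reads: $|F\cap W|$ has the parity of $\sum_{j\in W} c_j$.

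So the key step is to exhibit enough dependencies $W$ that no $F$ with $|F|\le 2$ satisfies all these conditions. The natural choice is the 10 operator magic squares of \Autoref{thm:trip_props}. Each is a set of six equations covering nine variables, each variable exactly twice, so it is a dependency. Its right-hand sides sum to 1, because these are MS-equivalent games. This is the same fact behind the classical value $\frac{8}{9}$: each MS-equivalent subsystem forces at least one violated equation. Hence $F$ must meet every one of the 10 squares in an odd number of equations.

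\textbf{Counting.} By the magic-square-first procedure and statement 1 of \Autoref{thm:trip_props}, each equation lies in $10\cdot 6/15 = 4$ squares.
\begin{itemize}
\item If $|F|=1$, it hits only 4 squares, so it cannot hit all 10.
\item If $|F|=2$, it hits at most 8 square incidences. It would need to hit 10 squares each an odd number of times, which requires at least 10 incidences. Contradiction.
\end{itemize}
Hence $|F|\ge 3$, so at most 12 equations are satisfied.

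This counting already closes the proof. The only real obstacle is verifying the combinatorial inputs: that every square has odd total parity, and that each equation lies in exactly 4 squares. Both can be checked from \autoref{fig:ms_diagr} and the appendix list of squares, or confirmed by the exhaustive search over all $2^{15}$ assignments that the statement alludes to.
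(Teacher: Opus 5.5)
Your proof is correct, and it takes a genuinely different route from the paper, which establishes the bound only by exhaustive computer search over all $2^{15}$ assignments (the script in Appendix~\ref{appx:sat}).

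You replace the search with a covering argument:
\begin{enumerate}
\item Each of the 10 operator magic squares is an inconsistent six-equation subsystem: nine variables, each appearing twice, with right-hand sides summing to $1$. Checking the appendix list, each square has exactly one parity-1 equation, or three for the original. So every assignment violates at least one equation in every square.
\item Each equation lies in exactly $4$ squares. This also checks out directly from the list; for instance, $E_0$ lies in the squares with rows $E_0,E_{10},E_8$ and $E_0,E_4,E_{11}$, and in those with columns $E_0,E_{13},E_7$ and $E_0,E_5,E_{14}$.
\item Hence any violated set has size at least $\lceil 10/4\rceil=3$.
\end{enumerate}

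Two small remarks. The left-kernel ``odd intersection'' refinement is more than you need, since ``meets every square'' already drives the count. The case $F=\emptyset$, which your bullets skip, is disposed of by the same count.

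Your route gives a short, hand-checkable certificate that explains why the threshold is exactly $12$. It mirrors the square-averaging already used for the $\frac{8}{9}$ bound in Theorem~\ref{thm:ams_val}. It also supplies the explicit extremal assignment $b\equiv 0$, which the paper later needs in part 2 of the proof of Theorem~\ref{thm:opt_sym}; the script only prints the maximum. The paper's search needs no structural insight, but it is a black box whose correctness rests on the code faithfully transcribing the equation table.
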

\begin{proof}
The theorem is proven by computational exhaustive search using the program in \autoref{appx:sat}.
\end{proof}

Once we are equipped with the above tools, we may prove that $\frac{13}{15}$ is indeed the maximal success probability for a symmetric strategy for the doily game:

\begin{proof}[Proof of \Autoref{thm:opt_sym}]
Our proof of the theorem is in two parts: first, we show that a success probability greater than $\frac{13}{15}$ is unattainable, and, second, we show that a success probability of $\frac{13}{15}$ is attainable.

\begin{enumerate}
    \item We show that a success probability greater than $\frac{13}{15}$ is unattainable by contradiction. Assume that $S$ is an optimal symmetric strategy for the doily game with success probability greater than $\frac{13}{15}$.
    
    Assume without loss of generality that $S$ satisfies all parity constraints. We may make this assumption since $S$ is optimal, and any symmetric strategy that violates parity constraints can only be improved by flipping an arbitrary bit for both players in each violating response.
    
    By \Autoref{thm:sym_prob}, $S$ must have at most two $\frac{2}{3}$-consistent assignments to have a success probability greater than $\frac{13}{15}$. For each variable index $m \in [15]$, let the bit assignment $b(m) \in \{0, 1\}$ be the majority value of the possible bit outputs for $V_m$ according to $S$. Because $S$ satisfies all parity constraints, and $S$ only has two $\frac{2}{3}$-consistent variables, the bit assignment $b$ must satisfy at least 13 out of the 15 equations. However, this contradicts \Autoref{thm:eqn_sat}, so the assumption that $S$ is a symmetric strategy for the doily game with with success probability greater than $\frac{13}{15}$ is false.
    
    \item Towards showing that a success probability of $\frac{13}{15}$ is attainable, let $b(m) \in \{0, 1\}$ for $m \in [15]$ be a bit assignment that satisfies 12 out of the 15 equations as in \Autoref{thm:eqn_sat}. Let $M \subseteq [15]$ be a set of three indices such that $V_m$ is in a different unsatisfied equation for each $m \in M$. Let $S$ be the symmetric strategy in which $V_m$ for $m \in M$ has the $\frac{2}{3}$-consistent assignment that is $b(m)$ except for when the equation is one of the unsatisfied equations, and $V_m$ for any other $r$ has the consistent assignment $b(m)$. Such a strategy satisfies all parity constraints, and, by \Autoref{thm:sym_prob}, $S$ wins with probability $\frac{13}{15}$.
\end{enumerate}
\end{proof}

In the above proof, we reason that there must exist an optimal classical strategy that satisfies all of the game's parity constraints. It will be useful for later to note that \emph{all} optimal classical strategies for the doily game satisfy all of the game's parity constraints.

\begin{theorem} \label{thm:opt_par}
Let $S$ be an optimal classical strategy for the doily game. $S$ satisfies all of the game's parity constraints.
\end{theorem}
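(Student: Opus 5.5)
We argue by contradiction, using the magic-square-first sampling procedure of \Autoref{def:procs} together with the value $\frac{8}{9}$ from \Autoref{thm:ams_val}. Let $S=(A,B)$ be optimal, so $S$ wins with probability exactly $\frac{8}{9}$, and suppose some answer violates a parity constraint. Without loss of generality (the game is symmetric in the two players), Alice's answer $A(j)$ to some equation $E_j$ has the wrong parity.

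\textbf{Step 1: bound the loss in each magic square.} By the magic-square-first procedure, the winning probability of $S$ is the average over the 10 MS-equivalent subgames of the winning probability of $S$ restricted to each subgame. Each of these is at most $\frac{8}{9}$, since every MS-equivalent game has classical value $\frac{8}{9}$. Because the average equals $\frac{8}{9}$, every one of the 10 restricted strategies must win its MS-equivalent game with probability exactly $\frac{8}{9}$. In the MS question distribution (36 ordered pairs), this means exactly 4 losing pairs in every square.

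\textbf{Step 2: pick a square containing $E_j$ and reach a contradiction there.} Choose an MS-equivalent subgame containing $E_j$; one exists because each commuting triple lies in some operator magic square. Since each of the 90 pairs lies in exactly one square, $E_j$ lies in $10\cdot 6/15=4$ of them. In that square Alice is asked $E_j$ together with any of 3 intersecting equations. All 3 of these pairs lose, because her parity is wrong. Note that 3 losses alone is fewer than the allowed 4, so this count by itself does not force a contradiction, and I need a sharper statement.

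\textbf{Step 3: the sharper lemma, which is the main obstacle.} I would prove that in the MS game, any deterministic strategy in which some answer violates parity wins with probability strictly less than $\frac{8}{9}$. To see this, replace Alice's invalid answer to $E_j$ by one of the valid triples obtained by flipping a single bit. Only the 3 pairs involving Alice on $E_j$ are affected, and all 3 were losing. After the flip, each of those pairs wins if the shared bit matches Bob's answer. Flipping the bit at the variable that Bob's corresponding answer disagrees with, or choosing the flip so that at least one pair becomes consistent, should strictly increase the win count. Repeating this repair yields a parity-satisfying strategy with more wins, and such strategies are bounded by 32 wins, i.e.\ probability $\frac{8}{9}$. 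So the original strategy had at most 31 wins, i.e.\ it loses at least 5 pairs.

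The delicate point is guaranteeing that some single-bit flip gains at least one win. Among the 3 pairs, each concerns a distinct variable of $E_j$ with Bob's bit $c_v$. Flipping the bit at $v$ changes only that coordinate. If Alice's bit at some $v$ disagrees with $c_v$, flipping it makes that pair consistent. If all three agree, flipping any bit gains nothing but also loses nothing. In that case I instead compare against the parity-valid answers available to Bob: the three $c_v$ agree with Alice's wrong-parity triple, so Bob's answers on the three crossing equations are forced to absorb the mismatch elsewhere, and I would argue via the row-versus-column total-parity argument that at least 5 losses remain. Once the lemma holds, the chosen square has win probability below $\frac{8}{9}$, which contradicts Step 1 and completes the proof.
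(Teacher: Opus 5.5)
\textbf{Where the proposal fails.} The proof breaks at Step~3. The hard case is only announced (``I would argue via the row-versus-column total-parity argument''), and your case analysis there is backwards. If Alice's three bits on $E_j$ all agree with Bob's $c_v$, then flipping the bit at any one $v$ leaves the other two coordinates consistent, so this is the easy case, not the hard one. More generally, the three single-bit flips can never all yield zero consistent pairs. That would require $A_v(j)=c_v$ and $A_u(j)\neq c_u$ for all $u\neq v$, for every $v$ at once, which is impossible. You also never check that Bob's crossing answers satisfy parity, and without that a consistent pair still loses.

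\textbf{The root cause.} You only reached for Step~3 because of a miscount in Step~1. The MS distribution in this paper is uniform over ordered pairs of \emph{intersecting} equations, i.e.\ a row and a column in either order. That is $18$ pairs, not $36$, so value $\frac{8}{9}$ means exactly $2$ losing pairs per square. Your counts are also internally inconsistent: ten squares of $36$ pairs cannot each own disjoint pairs among $90$. In fact each intersecting pair lies in exactly two squares, since $10\cdot 18=2\cdot 90$. Your figure of four squares per equation is nevertheless right.

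\textbf{The fix.} With the correct count, your proof closes at Step~2. In a square containing $E_j$, the three pairs with Alice on $E_j$ all lose, so that square is won with probability at most $\frac{15}{18}<\frac{8}{9}$. The magic-square-first average therefore falls below $\frac{8}{9}$, contradicting Theorem~\ref{thm:ams_val}.

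\textbf{Comparison with the paper.} This corrected route differs from the paper's, which argues locally through the equation-first procedure. The paper flips $V_m$ in Alice's answer to $E_j$ and uses optimality to force $B_n(k)\neq A_n(j)$. It then flips $V_n$ instead, which makes $(E_j,E_k)$ consistent and so improves $S$. This is essentially your Step~3 repair, carried out in the full doily game. That argument uses only optimality, not the numerical value, but it tacitly needs Bob's answer to $E_k$ to satisfy parity. Your averaging argument is shorter, mirrors the upper-bound half of Theorem~\ref{thm:ams_val}, and needs nothing about Bob's answers. The price is that it relies on the exact value $\frac{8}{9}$ and on the decomposition into MS-equivalent subgames.
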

\begin{proof}
We prove this theorem by contradiction. In particular, we show that an optimal doily strategy that violates a parity constraint can be improved by flipping one of the bits in the violated equation.

Assume towards contradiction that an optimal doily strategy $S = (A, B)$ violates the parity constraint on Alice's side of equation $E_j$, containing variables $V_m$, $V_n$, and some other third variable. Let the strategy $S' = (A', B')$ be defined such that $B' = B$ and $A'$ is almost identical to $A$, except that the bit corresponding to $V_m$ in Alice's response to $E_j$ is flipped, fixing the parity violation.

By the equation first procedure from \Autoref{def:procs}, $S$ wins with probability 0 conditioned on the referee sending Alice equation $E_j$ due to the parity violation, and $S$ wins with some probability $p$ conditioned on the referee sending Alice any equation other than $E_j$. Because $S'$ is identical to $S$ except for $A_m(j) \neq A'_m(j)$, $S'$ also wins with probability $p$ conditioned on the referee sending Alice any equation other than $E_j$.

As $S$ is optimal, $S'$ cannot be better than $S$. Thus, $S'$ also wins with probability 0 conditioned on the referee sending Alice equation $E_j$. Moreover, since $S'$ satisfies the parity constraint of $E_j$, it must be the case that $S'$ never satisfies the consistency constraints when Alice receives $E_j$.

Let $E_k$ denote one of the two equations other than $E_j$ that contains $V_n$. Let
\begin{align}
    b = A'_n(j).
\end{align}
As $S'$ never satisfies the consistency constraints when Alice receives $E_j$,
\begin{align}
    B'_n(k) = \overline{b}.
\end{align}

Now, consider the strategy $S'' = (A'', B'')$, defined such that $B'' = B$ and $A''$ is again almost identical to $A$, except that the bit corresponding to $V_n$ in Alice's response to $E_j$ is flipped. Then, we have that
\begin{align}
    A''_n(j) &= \overline{A_n(j)} \\
    &= \overline{A'_n(j)} \\
    &= \overline{b} \\
    &= B'_n(k) \\
    &= B''_n(k).
\end{align}
Thus, $S''$ wins when the equation-first procedure selects $E_j$ to send to Alice, then selects $E_k$ to send to Bob. However, as with $S'$, $S''$ must win with probability 0 conditioned on the referee sending Alice equation $E_j$, creating a contradiction. Thus, contradiction proves that the assumption that $S$ violates a parity constraint is false.
\end{proof}

\section{The \texorpdfstring{$p$}{p}-synchronous (\texorpdfstring{$p$}{p}-sync) doily game} \label{sec:psams}

Because any optimal classical strategy for the doily game must be asymmetric, we design a new game that is a probabilistic mixture of the doily game and a second game that penalizes asymmetric strategies. All of the questions from the doily game are retained, but we introduce 15 new \emph{synchronous} questions: with probability $p$, instead of playing the doily game, the referee uniformly randomly samples a single equation out of the 15 and sends it to both Alice and Bob, requiring that their answers are completely identical while still requiring that the parity constraint of that equation is satisfied. We call this game the \emph{$p$-synchronous doily game}.

Since asymmetric strategies disagree between Alice and Bob on the output for at least one equation $E_j$, the players lose if they are synchronously both asked $E_j$. As desired, an asymmetric strategy cannot win the $p$-synchronous doily game with probability 1.\footnote{Interestingly, the same trick does not work for the MS game, as the MS game already has optimal symmetric strategies. As such, adding synchronous questions only benefits the players by giving them free wins.}

We formally define the game as follows:
\begin{definition}
For $p \in [0, 1]$, the \emph{$p$-synchronous ($p$-sync) doily game} is defined as the nonlocal game $([15], [15], \{0, 1\}^3, \{0, 1\}^3, (1-p)\pi_1 + p\pi_2, R)$, for $\pi_1$ being the uniform distribution over
\begin{align}
    \{(j, k) \in [15] \times [15] : ~\text{$E_j$ and $E_k$ have exactly one variable in common}\},
\end{align}
$\pi_2$ being the uniform distribution over
\begin{align}
    \{(j, j) \in [15] \times [15] : j \in [15]\},
\end{align}
and $R: [15] \times [15] \times \{0, 1\}^3 \times \{0, 1\}^3 \to \{0, 1\}$ being the function such that $R(j, k, a, b) = 1$ if and only if $a$ satisfies the parity of $E_j$, $b$ satisfies the parity of $E_k$ and the bits corresponding to any variables shared between $E_j$ and $E_k$ in $a$ are the same as those in $b$.

As with the prior two games, we write $A_m(j)$ to denote Alice's bit output for $V_m$ when asked equation $E_j$, and the same for $B_m(j)$ and Bob.
\end{definition}

From the definition, it is clear to see that the perfect entangled strategy for the doily game is also a perfect entangled strategy for the $p$-sync doily game for all values of $p$. Specifically, the 0-sync doily game is just the doily game, and the 1-sync doily game is a trivial game in which Alice's equation is always the same as Bob's. These have classical values $\frac{8}{9}$ and $1$, respectively. However, the $\frac{1}{7}$-sync doily game has classical value $\frac{31}{35}$, lower than that of either the MS or doily games.

\begin{theorem} \label{thm:psams}
The $\frac{1}{7}$-sync doily game has classical value $\frac{31}{35}$.
\end{theorem}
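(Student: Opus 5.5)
The plan is to write the winning probability of any deterministic strategy $S=(A,B)$ as
\begin{align}
    \frac{6}{7}\cdot\frac{W}{90}+\frac{1}{7}\cdot\frac{s}{15}=\frac{W+s}{105},
\end{align}
where $W$ is the number of the 90 ordered intersecting pairs (\Autoref{thm:trip_props}) that $S$ wins, and $s$ is the number of equations $E_j$ on which the synchronous question $(j,j)$ is won. Since $\frac{31}{35}=\frac{93}{105}$, the theorem is equivalent to two claims: $\max(W+s)=93$ over deterministic strategies, and some strategy attains it. Restricting to deterministic strategies is justified because mixed strategies are convex combinations of them.

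\textbf{Attainability.} I would take the optimal symmetric strategy from the proof of \Autoref{thm:opt_sym}. It satisfies every parity constraint and has $A=B$, so $s=15$. It wins with probability $\frac{13}{15}$ on the doily game, so $W=78$. Hence $W+s=93$.

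\textbf{Upper bound.} Split by the value of $s$.

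\emph{Case $s\le 13$.} By \Autoref{thm:ams_val}, $W\le 80$, so $W+s\le 93$.

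\emph{Case $s=15$.} Then $A=B$, so $S$ is symmetric and \Autoref{thm:opt_sym} gives $W\le 78$. Hence $W+s=93$.

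\emph{Case $s=14$ (the main obstacle).} I must show $W\le 79$, i.e.\ that such a strategy loses at least 11 of the 90 pairs. If $A=B$, the strategy is symmetric and $W\le 78$ again. Otherwise $A$ and $B$ differ on exactly one equation $E_j$ and agree everywhere else.

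\emph{Averaging argument for $s=14$.} I compare $(A,B)$ with the two symmetric strategies $(A,A)$ and $(B,B)$, each of which loses at least 12 pairs by \Autoref{thm:opt_sym}. Partition the 90 pairs into three classes:
\begin{itemize}
    \item[(i)] the 78 pairs in which neither player receives $E_j$;
    \item[(ii)] the 6 pairs $(j,k)$, with Alice asked $E_j$;
    \item[(iii)] the 6 pairs $(k,j)$, with Bob asked $E_j$.
\end{itemize}
On class (i) all three strategies give the same answers, so they have a common loss count $L_0$. On class (ii), $(A,B)$ behaves like $(A,A)$: it uses $A(j)$ against $A(k)=B(k)$. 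Let $a_A$ and $a_B$ be the number of losses there for $(A,A)$ and $(B,B)$ respectively. On class (iii), $(A,B)$ behaves like $(B,B)$; let $b_A$ and $b_B$ be the corresponding loss counts.

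For a symmetric strategy, the pair $(j,k)$ is lost if and only if $(k,j)$ is lost, because the winning condition is symmetric. Hence $a_A=b_A$ and $a_B=b_B$. Therefore
\begin{align}
    L(A,B)=L_0+a_A+b_B=\tfrac{1}{2}\bigl(L(A,A)+L(B,B)\bigr)\ge 12,
\end{align}
so $W\le 78$. This closes the case $s=14$ and completes the upper bound.

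The only subtle point I would double-check is that \Autoref{thm:opt_sym} bounds all symmetric strategies, including those violating some parity constraint. It does: its proof first reduces to parity-satisfying strategies without loss.
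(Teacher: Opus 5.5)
Your proof is correct, and its key step differs from the paper's. The paper splits strategies into symmetric and asymmetric ones and controls the asymmetric case with \Autoref{lem:sync}. That lemma's proof symmetrizes an optimal asymmetric $S=(A,B)$ to $(A,A)$ and uses \Autoref{thm:opt_par} to isolate two disputed variables in $E_j$. It then classifies each disputed variable into cases 1, 2a, 2b via the variable-first procedure, rules out most probability shifts with the $\frac{15+2q}{45}$ form of \Autoref{thm:sym_prob}, and finally exhibits a symmetric strategy winning $\frac{41}{45}$.

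You instead split by $s$ and settle $s=14$ with the exact identity $L(A,B)=\frac{1}{2}\bigl(L(A,A)+L(B,B)\bigr)$, which holds whenever $A$ and $B$ differ on at most one equation. This uses only \Autoref{thm:ams_val} and \Autoref{thm:opt_sym}; you are right that the latter covers parity-violating symmetric strategies. It also proves more than \Autoref{lem:sync}: any strategy whose players disagree on at most one equation wins at most $78$ of the $90$ doily pairs, not merely fewer than $80$. In fact, the paper's case analysis works through the two terms of your average inside its contradiction hypothesis. There $S'=(A,A)$ has value $\frac{39}{45}$ and $S''$ is exactly $(B,B)$ with value $\frac{41}{45}$; their average is the assumed $\frac{40}{45}$, and your identity rules out such a pair directly.

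Organizing by $s$ also closes a small hole in the paper's write-up. \Autoref{lem:sync} is stated only for doily-optimal asymmetric strategies, yet the paper applies it to an arbitrary asymmetric $S_2$. A non-optimal $S_2$ with $14$ synchronous wins needs the extra remark $W\le 79$, which gives exactly $79+14=93$ at $p=\frac{1}{7}$. What the paper's framing buys in return is the two linear bounds $p+(1-p)\frac{13}{15}$ and $p\frac{13}{15}+(1-p)\frac{8}{9}$, whose crossing explains the choice $p=\frac{1}{7}$; your $W+s$ bookkeeping is tailored to that value.

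One cosmetic fix: in the case $s=15$ you should conclude $W+s\le 93$, not $W+s=93$.
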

To prove the above theorem, we will use one main lemma:
\begin{lemma} \label{lem:sync}
Let $S$ be an optimal asymmetric strategy for the doily game. $S$ wins on at most 13 of the 15 $p$-sync doily synchronous questions.
\end{lemma}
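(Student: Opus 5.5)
We argue by contradiction: if an optimal doily strategy $S=(A,B)$ wins 14 or 15 synchronous questions, we will produce a symmetric strategy that wins with probability $\frac{8}{9}$, contradicting \Autoref{thm:opt_sym}.

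The case of 15 is immediate. By \Autoref{thm:opt_par}, $S$ satisfies every parity constraint, so winning all 15 synchronous questions means $A(j)=B(j)$ for every $j$. Then $A=B$, contradicting asymmetry; this case is even excluded by hypothesis.

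The main case is exactly 14 synchronous wins. Then $A$ and $B$ agree on every equation except a single $E_j$. Both $A(j)$ and $B(j)$ satisfy the parity of $E_j$, again by \Autoref{thm:opt_par}. We compare $S$ with the two symmetric strategies $S_A=(A,A)$ and $S_B=(B,B)$, using the equation-first decomposition of \Autoref{def:procs}.

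\begin{itemize}
\item Question pairs avoiding $E_j$ on both sides have the same payoff under $S$, $S_A$ and $S_B$, since all three strategies agree off $E_j$.
\item The pairs involving $E_j$ are $(E_j,E_k)$ and $(E_k,E_j)$ for the six equations $E_k$ meeting $E_j$ in one variable. Under $S$ these score $\alpha+\beta$, where
\[
\alpha=\#\{k : A(j) \text{ agrees with } A(k)\}, \qquad \beta=\#\{k : B(j) \text{ agrees with } B(k)\},
\]
using $B(k)=A(k)$ for $k\neq j$.
\item Under $S_A$ the same pairs score $2\alpha$, and under $S_B$ they score $2\beta$.
\end{itemize}

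Since $\max(2\alpha,2\beta)\ge\alpha+\beta$, one of $S_A$ or $S_B$ wins with probability at least that of $S$, namely $\frac{8}{9}$. This contradicts \Autoref{thm:opt_sym}, which caps symmetric strategies at $\frac{13}{15}<\frac{8}{9}$. Hence at most 13 synchronous questions are won.

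The main point needing care is the bookkeeping that $E_j$ never appears paired with itself in $\pi_1$. This makes the self-consistency of $A(j)$ against $A(j)$ irrelevant, so the contribution of the $E_j$-pairs really is linear in the pair $(A(j),B(j))$ as claimed. Beyond that, the argument only uses the convexity observation $\max(2\alpha,2\beta)\ge\alpha+\beta$.
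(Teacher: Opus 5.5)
Your proof is correct, and it takes a genuinely different and much shorter route than the paper.

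The paper symmetrizes only to $(A,A)$ and tracks the change in winning probability variable by variable, using the variable-first procedure. There are three cases each for $V_m$ and $V_n$, giving five possible changes. Four of these are excluded using optimality, \Autoref{thm:opt_sym}, and the form $\frac{15+2q}{45}$ from \Autoref{thm:sym_prob}. In the surviving $-\frac{1}{45}$ case, the paper flips the two differing bits in the answer to $E_j$, producing a symmetric strategy with only two $\frac{2}{3}$-consistent variables. That last strategy is exactly your $(B,B)$, so the paper is implicitly making your comparison, but only after the case analysis.

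Your identity says the value of $S$ is the average of the values of $(A,A)$ and $(B,B)$. It holds because $\pi_1$ has no diagonal and $R(j,k,a,b)=R(k,j,b,a)$. It makes \Autoref{thm:sym_prob}, the consistency bookkeeping, and the two-bit-flip structure unnecessary. It does not even need optimality or \Autoref{thm:opt_par}, since the symmetry of $R$ also covers the parity factors. So it shows that any doily strategy whose local strategies differ on a single equation wins with probability at most $\frac{13}{15}$.

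It also extends. Suppose $A$ and $B$ differ on two equations $E_j$ and $E_{j'}$. Take the two symmetric strategies that use $A$ on $E_j$ and $B$ on $E_{j'}$, and vice versa, agreeing with $A=B$ elsewhere. Their average again equals the value of $S$, because the cross pairs $(E_j,E_{j'})$ and $(E_{j'},E_j)$ match as well. Hence an optimal strategy must lose at least three synchronous questions. This is exactly the strengthening to 12 of 15 that the paper's conclusion leaves open and estimates would need years of exhaustive search.

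In return, the paper's route gives finer, variable-level information about where symmetrizing gains or loses. However, it does not generalize as readily beyond a single differing equation.
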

\begin{proof}
At a high level, we prove this lemma by contradiction through an analysis of cases. We first show that we need only consider when $S$ wins exactly 14 of the synchronous questions. We next consider the possible strategies obtained by symmetrizing $S$, and show that there are only nine possible cases. Of those nine cases, we then show that there are only five possible probabilities of winning, each of which we prove leads to a contradiction.

Assume towards contradiction that $S = (A, B)$ wins on more than 13 of the 15 synchronous questions. By \Autoref{thm:opt_par}, $S$ satisfies all of the game's parity constraints. As mentioned previously, $S$ cannot both be an asymmetric strategy and win on all 15 out of 15 synchronous questions. Thus, $S$ wins on exactly 14 of the 15 synchronous questions. As such, there exists exactly one equation $E_j$ such that
\begin{align}
    A(j) \neq B(j).
\end{align}
Moreover, as $A$ and $B$ both satisfy all of the parity constraints, $A(j)$ and $B(j)$ differ by exactly two bit flips. Let $V_m$ and $V_n$ denote the two variables such that
\begin{align}
    A_m(j) &\neq B_m(j) & A_n(j) &\neq B_n(j).
\end{align}

Now, let $S' = (A', B')$ denote the symmetric strategy defined by symmetrizing $S$ so that Alice's and Bob's local strategies under $S'$ are both the same as Alice's local strategy under $S$. In other words, we have that
\begin{align}
    A' = B' = A.
\end{align}

Consider the difference in success probabilities between $S$ and $S'$ conditioned on the variable-first procedure from \Autoref{def:procs} selecting $V_m$. Recall from \Autoref{def:consist} that there are two possible cases: that $S'$ has a consistent assignment for $V_m$ and that $S'$ has a $\frac{2}{3}$-consistent assignment for $V_m$.
\begin{enumerate}
    \item \label{case:1} Assume that $S'$ has a consistent assignment for $V_m$. Let $k$ and $\ell$ denote the indices of the two equations other than equation $j$ that involve variable $V_m$. Then, Alice and Bob both always output the same bit value, $b$, for $V_m$ when asked any of the equations $E_j$, $E_k$, and $E_\ell$. Thus,
    \begin{align}
        A'_m(j) &= b, & A'_m(k) &= b, & A'_m(\ell) &= b, \\
        B'_m(j) &= b, & B'_m(k) &= b, & B'_m(\ell) &= b. \label{eqn:case_1_1}
    \end{align}
    As with \autoref{fig:consist} in the proof of \Autoref{thm:sym_prob}, we count the cases in which Alice and Bob agree or disagree on the value of $V_m$ under $S'$ on the right-hand side of \autoref{fig:case_1} (the left-hand side will be discussed shortly). From the figure, we see that $S'$ wins the doily game with probability $1$, conditioned on the variable-first procedure selecting $V_m$.

    \iffigsone
    \begin{figure}[t]
        \centering
        \begin{tikzpicture}[scale=1.75]
        \node (Vm) at (0,1.125) {\Large $V_m$ under $S$};
        
        \node (Ej) at (-1,0.5) {\Large $E_j$};
        \node (Ek) at (0,0.5) {\Large $E_k$};
        \node (El) at (1,0.5) {\Large $E_\ell$};
        
        \node (A) at (-1.5,0) {\Large $A$};
        \node (B) at (-1.5,-1) {\Large $B$};
        
        \node (Aj) at (-1,0) {$b$};
        \node (Ak) at (0,0) {$b$};
        \node (Al) at (1,0) {$b$};
        \node (Bj) at (-1,-1) {$\overline{b}$};
        \node (Bk) at (0,-1) {$b$};
        \node (Bl) at (1,-1) {$b$};
    
        \draw[very thick,red] (Ak) -- (Bj);
        \draw[very thick,red] (Al) -- (Bj);
        \draw[very thick,blue] (Aj) -- (Bk);
        \draw[very thick,blue] (Aj) -- (Bl);
        \draw[very thick,blue] (Ak) -- (Bl);
        \draw[very thick,blue] (Al) -- (Bk);
        \end{tikzpicture}
        ~~~
        \begin{tikzpicture}[scale=1.75]
        \node (Vm) at (0,1.125) {\Large $V_m$ under $S'$};
        
        \node (Ej) at (-1,0.5) {\Large $E_j$};
        \node (Ek) at (0,0.5) {\Large $E_k$};
        \node (El) at (1,0.5) {\Large $E_\ell$};
        
        \node (A) at (-1.5,0) {\Large $A'$};
        \node (B) at (-1.5,-1) {\Large $B'$};
        
        \node (Aj) at (-1,0) {$b$};
        \node (Ak) at (0,0) {$b$};
        \node (Al) at (1,0) {$b$};
        \node (Bj) at (-1,-1) {$b$};
        \node (Bk) at (0,-1) {$b$};
        \node (Bl) at (1,-1) {$b$};
    
        \draw[very thick,blue] (Aj) -- (Bk);
        \draw[very thick,blue] (Aj) -- (Bl);
        \draw[very thick,blue] (Ak) -- (Bj);
        \draw[very thick,blue] (Ak) -- (Bl);
        \draw[very thick,blue] (Al) -- (Bj);
        \draw[very thick,blue] (Al) -- (Bk);
        \end{tikzpicture}
        \caption{Alice and Bob's possible responses for variable $V_m$ according to $S$ and according to $S'$ in \autoref{case:1}. Line color indicates whether the players pass (blue) or fail (red) the consistency check.}
        \label{fig:case_1}
    \end{figure}
    \fi
    
    Now, we may return to considering $S$. As $E_j$ is the equation on which $B$ differs from $B'$ on the bit assignment of $V_m$, it is the case that
    \begin{align}
        A_m(j) &= b, & A_m(k) &= b, & A_m(\ell) &= b, \\
        B_m(j) &= \overline{b}, & B_m(k) &= b, & B_m(\ell) &= b, \label{eqn:case_1_2}
    \end{align}
    which is also depicted on the left-hand side of \autoref{fig:case_1}. From the figure, we see that $S$ wins the doily game with probability $\frac{2}{3}$, conditioned on the variable-first procedure selecting $V_m$. Thus, the change in conditional success probability when switching from $S$ to $S'$ is $\frac{1}{3}$.

    \item Assume that $S'$ has a $\frac{2}{3}$-consistent assignment for $V_m$. Then, there are another two possible cases: that $E_j$ is one of the two equations that agree on the assignment of $V_m$ and that $E_j$ is the equation that differs from the other two on the assignment of $V_m$.
    \begin{enumerate}
        \item \label{case:2a} Assume that $E_j$ is one of the two equations that agree on the assignment of $V_m$. Let $k$ and $\ell$ denote the indices of the two equations other than equation $j$ that involve variable $V_m$.
        \iffigsone
        \begin{figure}[b]
            \centering
            \begin{tikzpicture}[scale=1.75]
            \node (Vm) at (0,1.125) {\Large $V_m$ under $S$};
            
            \node (Ej) at (-1,0.5) {\Large $E_j$};
            \node (Ek) at (0,0.5) {\Large $E_k$};
            \node (El) at (1,0.5) {\Large $E_\ell$};
            
            \node (A) at (-1.5,0) {\Large $A$};
            \node (B) at (-1.5,-1) {\Large $B$};
            
            \node (Aj) at (-1,0) {$b$};
            \node (Ak) at (0,0) {$b$};
            \node (Al) at (1,0) {$\overline{b}$};
            \node (Bj) at (-1,-1) {$\overline{b}$};
            \node (Bk) at (0,-1) {$b$};
            \node (Bl) at (1,-1) {$\overline{b}$};
        
            \draw[very thick,red] (Aj) -- (Bl);
            \draw[very thick,red] (Ak) -- (Bj);
            \draw[very thick,red] (Ak) -- (Bl);
            \draw[very thick,red] (Al) -- (Bk);
            \draw[very thick,blue] (Aj) -- (Bk);
            \draw[very thick,blue] (Al) -- (Bj);
            \end{tikzpicture}
            ~~~
            \begin{tikzpicture}[scale=1.75]
            \node (Vm) at (0,1.125) {\Large $V_m$ under $S'$};
            
            \node (Ej) at (-1,0.5) {\Large $E_j$};
            \node (Ek) at (0,0.5) {\Large $E_k$};
            \node (El) at (1,0.5) {\Large $E_\ell$};
            
            \node (A) at (-1.5,0) {\Large $A'$};
            \node (B) at (-1.5,-1) {\Large $B'$};
            
            \node (Aj) at (-1,0) {$b$};
            \node (Ak) at (0,0) {$b$};
            \node (Al) at (1,0) {$\overline{b}$};
            \node (Bj) at (-1,-1) {$b$};
            \node (Bk) at (0,-1) {$b$};
            \node (Bl) at (1,-1) {$\overline{b}$};
        
            \draw[very thick,red] (Aj) -- (Bl);
            \draw[very thick,red] (Ak) -- (Bl);
            \draw[very thick,red] (Al) -- (Bj);
            \draw[very thick,red] (Al) -- (Bk);
            \draw[very thick,blue] (Aj) -- (Bk);
            \draw[very thick,blue] (Ak) -- (Bj);
            \end{tikzpicture}
            \caption{Alice and Bob's possible responses for variable $V_m$ according to $S$ and according to $S'$ in \autoref{case:2a}. Line color indicates whether the players pass (blue) or fail (red) the consistency check.}
            \label{fig:case_2a}
        \end{figure}
        \fi
        Without loss of generality, Alice and Bob output the same bit value $b$ for $V_m$ when asked either $E_j$ or $E_k$, and output the opposite bit value $\overline{b}$ when asked $E_\ell$. As with \autoref{case:1}, we may depict Alice and Bob's possible responses conditioned on the variable-first procedure selecting $V_m$ according to $S'$ on the right-hand side of \autoref{fig:case_2a}, then derive their possible responses according to $S$ on the left-hand side.
        
        By counting the winning and losing responses in the figure, we see that $S'$ wins the doily game with conditional probability $\frac{1}{3}$, and $S$ also wins with conditional probability $\frac{1}{3}$. Thus, the change in conditional success probability when switching from $S$ to $S'$ is $0$.
    
        \item \label{case:2b} Assume that $E_j$ is the equation that differs from the other two on the assignment of $V_m$. Let $k$ and $\ell$ denote the indices of the two equations other than equation $j$ that involve variable $V_m$. Then, Alice and Bob both output the bit value $b$ for $V_m$ when asked $E_j$, but output the opposite bit value $\overline{b}$ when asked either $E_k$ or $E_\ell$. As with \autoref{case:1}, we may depict Alice and Bob's possible responses conditioned on the variable-first procedure selecting $V_m$ according to $S'$ on the right-hand side of \autoref{fig:case_2b}, then derive their possible responses according to $S$ on the left-hand side.

        \iffigsone
        \begin{figure}[b]
            \centering
            \begin{tikzpicture}[scale=1.75]
            \node (Vm) at (0,1.125) {\Large $V_m$ under $S$};
            
            \node (Ej) at (-1,0.5) {\Large $E_j$};
            \node (Ek) at (0,0.5) {\Large $E_k$};
            \node (El) at (1,0.5) {\Large $E_\ell$};
            
            \node (A) at (-1.5,0) {\Large $A$};
            \node (B) at (-1.5,-1) {\Large $B$};
            
            \node (Aj) at (-1,0) {$b$};
            \node (Ak) at (0,0) {$\overline{b}$};
            \node (Al) at (1,0) {$\overline{b}$};
            \node (Bj) at (-1,-1) {$\overline{b}$};
            \node (Bk) at (0,-1) {$\overline{b}$};
            \node (Bl) at (1,-1) {$\overline{b}$};
        
            \draw[very thick,red] (Aj) -- (Bk);
            \draw[very thick,red] (Aj) -- (Bl);
            \draw[very thick,blue] (Ak) -- (Bj);
            \draw[very thick,blue] (Ak) -- (Bl);
            \draw[very thick,blue] (Al) -- (Bj);
            \draw[very thick,blue] (Al) -- (Bk);
            \end{tikzpicture}
            ~~~
            \begin{tikzpicture}[scale=1.75]
            \node (Vm) at (0,1.125) {\Large $V_m$ under $S'$};
            
            \node (Ej) at (-1,0.5) {\Large $E_j$};
            \node (Ek) at (0,0.5) {\Large $E_k$};
            \node (El) at (1,0.5) {\Large $E_\ell$};
            
            \node (A) at (-1.5,0) {\Large $A'$};
            \node (B) at (-1.5,-1) {\Large $B'$};
            
            \node (Aj) at (-1,0) {$b$};
            \node (Ak) at (0,0) {$\overline{b}$};
            \node (Al) at (1,0) {$\overline{b}$};
            \node (Bj) at (-1,-1) {$b$};
            \node (Bk) at (0,-1) {$\overline{b}$};
            \node (Bl) at (1,-1) {$\overline{b}$};
        
            \draw[very thick,red] (Aj) -- (Bk);
            \draw[very thick,red] (Aj) -- (Bl);
            \draw[very thick,red] (Ak) -- (Bj);
            \draw[very thick,red] (Al) -- (Bj);
            \draw[very thick,blue] (Ak) -- (Bl);
            \draw[very thick,blue] (Al) -- (Bk);
            \end{tikzpicture}
            \caption{Alice and Bob's possible responses for variable $V_m$ according to $S$ and according to $S'$ in \autoref{case:2b}. Line color indicates whether the players pass (blue) or fail (red) the consistency check.}
            \label{fig:case_2b}
        \end{figure}
        \fi

        By counting the winning and losing responses in the figure, we see that $S'$ still wins the doily game with conditional probability $\frac{1}{3}$, but $S$ wins with conditional probability $\frac{2}{3}$. Thus, the change in conditional success probability when switching from $S$ to $S'$ is $-\frac{1}{3}$.
    \end{enumerate}
\end{enumerate}
In essence, conditioned on the variable-first procedure from \Autoref{def:procs} selecting $V_m$, switching strategy from $S$ to $S'$ changes the success probability by $\frac{1}{3}$ if $V_m$ satisfies \autoref{case:1}, 0 if $V_m$ satisfies \autoref{case:2a}, and $-\frac{1}{3}$ if $V_m$ satisfies \autoref{case:2b}. The same logic for $V_m$ also applies to $V_n$, so there are nine possible ways in total that $S'$ may differ from~$S$.

Since the variable-first procedure selects a variable with probability $\frac{1}{15}$, switching strategy from $S$ to $S'$ changes the \emph{overall} success probability by $-\frac{2}{45}$, $-\frac{1}{45}$, $0$, $\frac{1}{45}$, or $\frac{2}{45}$. However, as $S$ is already an optimal strategy, the change in probability cannot be $\frac{1}{45}$ or $\frac{2}{45}$. Moreover, the optimal doily game strategy is not symmetric, so the change in probability cannot be 0. Additionally, by \Autoref{thm:sym_prob}, $S'$ must win with probability of the form $\frac{2q + 15}{45}$. However, there is no $q$ such that $\frac{2q + 15}{45} = \frac{38}{45} = \frac{8}{9} - \frac{2}{45}$, so the change in probability also cannot be $-\frac{2}{45}$.

Thus, we need only consider a change in overall success probability by $-\frac{1}{45}$. As such, $S'$ wins with probability $\frac{8}{9} - \frac{1}{45} = \frac{13}{15}$, so it is an optimal symmetric strategy for the doily game. Furthermore, by \Autoref{thm:sym_prob}, $S'$ has three variables with $\frac{2}{3}$-consistent assignments.

Recall from our three cases for each variable that an overall change in success probability of $-\frac{1}{45}$ is only possible if one variable satisfies \autoref{case:2a} and the other variable satisfies \autoref{case:2b}. Without loss of generality, assume $V_m$ satisfies case 2a and $V_n$ satisfies case 2b. Now, consider the symmetric strategy $S'' = (A'', B'')$ in which $A'' = B''$ is nearly identical to $A'$, except that
\begin{align*}
    A''_m(j) &= \overline{A'_m(j)} & A''_n(j) &= \overline{A'_n(j)}.
\end{align*}
Since the only changes between $A'$ and $A''$ are two bits flipped in the response to $E_j$, $S''$ still satisfies all parity constraints. Furthermore, since $V_m$ satisfies case 2a, $V_m$ still has a $\frac{2}{3}$-consistent assignment under $S''$. On the other hand, since $V_n$ satisfies case 2b, $V_n$ has a fully consistent assignment under $S''$. As $S'$ has exactly three variables with $\frac{2}{3}$-consistent assignments, including $V_m$ and $V_n$, $S''$ has exactly two variables with $\frac{2}{3}$-consistent assignments. By \Autoref{thm:sym_prob}, $S''$ wins the doily game with probability $\frac{41}{45} > \frac{39}{45} = \frac{13}{15}$, which contradicts \Autoref{thm:opt_sym}.

Thus, the original assumption that $S$ wins on more than 13 out of the 15 synchronous questions is false, completing the proof of \Autoref{lem:sync}.
\end{proof}
Now, we may proceed to the proof of the main theorem:
\begin{proof}[Proof of \Autoref{thm:psams}]
Let $S_1$ be a symmetric strategy and $S_2$ an asymmetric for the $p$-sync doily game. As $S_1$ is a symmetric strategy, it always wins when the synchronous questions are asked, and, by \Autoref{thm:opt_sym}, it wins the non-synchronous, doily game questions with probability at most $\frac{13}{15}$. Thus, $S_1$ wins the $p$-sync doily game with probability at most
\begin{align}
    p + (1-p)\frac{13}{15}.
\end{align}

As $S_2$ is an asymmetric strategy, it wins the doily game questions with probability at most $\frac{8}{9}$, according to \Autoref{thm:ams_val}. Let $q \in [15]$ denote the number of equations on which Alice and Bob agree when using $S_2$. By \Autoref{lem:sync}, $q \leq 13$. As the probability that $S_2$ wins when the synchronous questions are asked is $\frac{q}{15}$, $S_2$ wins the $p$-sync doily game with probability at most
\begin{align}
    p\frac{13}{15} + (1-p)\frac{8}{9}.
\end{align}
We may then consider the $p$ such that the upper bound on the success probability of $S_1$ is the same as that of $S_2$, which is when
\begin{align}
    p + (1-p)\frac{13}{15} = p\frac{13}{15} + (1-p)\frac{8}{9}.
\end{align}
Solving the above yields $p = \frac{1}{7}$, and the upper bound on the success probability for the game using either kind of strategy is $\frac{31}{35} < \frac{8}{9}$. Moreover, a success probability of $\frac{31}{35}$ is exactly achieved by an optimal symmetric strategy for the doily game, which completes the proof of \Autoref{thm:psams}.
\end{proof}

\section{Conclusion}

In this work, we have shown that the $\frac{1}{7}$-sync doily game improves the entangled-classical gap for two Bell states to $\frac{4}{35}$ from the previous known value of $\frac{1}{9}$ for the MS game. We close this work with several open questions.

Though we have found the $\frac{1}{7}$-sync doily game to surpass the MS game in distinguishing between two Bell states and no entanglement, there may be yet another game that surpasses the $\frac{1}{7}$-sync doily game. In particular, from \Autoref{thm:psams}, we have that for $p = \frac{1}{7}$, the the classical value for the $p$-synchronous doily game is $\frac{31}{35} < \frac{8}{9}$. However, it is possible that for $p < \frac{1}{7}$, the value of the $p$-synchronous doily game is less than $\frac{31}{35}$. In particular, if \Autoref{lem:sync} can be improved to state that an optimal asymmetric strategy for the doily game wins on at most 12 of 15 synchronous questions, the $\frac{1}{10}$-synchronous doily game would have a classical value of $\frac{22}{25} < \frac{31}{35}$. The author estimates that this could be checked exhaustively using about three years of continuous work on an RTX 4090 graphics card.

Additionally, the doily game and the $p$-sync doily game can easily be generalized to higher dimensions in several ways. One way, for a chosen dimension $2^d$, is that we may associate each of the commuting triples of $P_d$ having product $\pm I^{\otimes d}$ with an equation and each non-identity element of $P_d$ with a variable. Such a game may be won perfectly by players using maximal entanglement with local dimension $2^d$, but how well can unentangled players do? In particular, can such a game for $d = 4$ better distinguish entangled and unentangled players than parallel repetition of the MS game?

Like the MS game, the doily game (and $p$-sync doily game) admits an equivalence class of games obtained by multiplying operators by $-II$ and accordingly flipping the associated parity requirements. For the MS game, equivalence can be checked according to the property that the total parity of the magic square according to the rows must differ from the total parity of the magic square according to the columns for the game to be an MS-equivalent game. However, the equations of the doily and game cannot be partitioned into rows and columns. Is there a simple property of the parity constraints that can be checked to determine whether a game is equivalent to the doily game?

\section{Acknowledgments}

I would like to thank Richard Cleve for numerous insightful discussions that helped to develop this work and to make this manuscript presentable. I would also like to thank Michael Tiemann for pointing out reference \cite{marceaux2019}, which helped to visualize the three dimensional geometry of the doily and $p$-sync doily games. This research received no external funding.

{
\sloppy
\printbibliography
}

\pagebreak
\appendix

\section{Operator magic squares} \label{appx:squares}
In \autoref{fig:oms}, we list the nine operator magic squares that form the complete set of 10 with the original given in \autoref{fig:ms_diagr}.

\begin{figure}[h]
    \centering
    \begin{tikzpicture}[scale=0.4]
    \coordinate (V0) at (-2.75,2.75);
    \coordinate (V1) at (0,2.75);
    \coordinate (V2) at (2.75,2.75);
    \coordinate (V3) at (-2.75,0);
    \coordinate (V4) at (0,0);
    \coordinate (V5) at (2.75,0);
    \coordinate (V6) at (-2.75,-2.75);
    \coordinate (V7) at (0,-2.75);
    \coordinate (V8) at (2.75,-2.75);
    
    \draw[ultra thick] (V0) -- (V2);
    \draw[ultra thick] (V3) -- (V5);
    \draw[ultra thick] (V6) -- (V8);
    
    \draw[red,ultra thick] (V0) -- (V6);
    \draw[ultra thick] (V1) -- (V7);
    \draw[ultra thick] (V2) -- (V8);
    
    \draw[ultra thick,fill=white] (V0) circle [radius=1];
    \draw[ultra thick,fill=white] (V1) circle [radius=1];
    \draw[ultra thick,fill=white] (V2) circle [radius=1];
    \draw[ultra thick,fill=white] (V3) circle [radius=1];
    \draw[ultra thick,fill=white] (V4) circle [radius=1];
    \draw[ultra thick,fill=white] (V5) circle [radius=1];
    \draw[ultra thick,fill=white] (V6) circle [radius=1];
    \draw[ultra thick,fill=white] (V7) circle [radius=1];
    \draw[ultra thick,fill=white] (V8) circle [radius=1];
    
    \node at (V0) {$XX$};
    \node at (V1) {$YZ$};
    \node at (V2) {$ZY$};
    \node at (V3) {$YY$};
    \node at (V4) {$YI$};
    \node at (V5) {$IY$};
    \node at (V6) {$ZZ$};
    \node at (V7) {$IZ$};
    \node at (V8) {$ZI$};
    
    \node at (-4.8,2.75) {\Large $E_9$};
    \node at (-4.8,0) {\Large $E_4$};
    \node at (-4.8,-2.75) {\Large $E_8$};
    
    \node at (-2.75,4.8) {\Large $E_{12}$};
    \node at (0,4.8) {\Large $E_5$};
    \node at (2.75,4.8) {\Large $E_7$};
    \end{tikzpicture}
    \begin{tikzpicture}[scale=0.4]
    \coordinate (V0) at (-2.75,2.75);
    \coordinate (V1) at (0,2.75);
    \coordinate (V2) at (2.75,2.75);
    \coordinate (V3) at (-2.75,0);
    \coordinate (V4) at (0,0);
    \coordinate (V5) at (2.75,0);
    \coordinate (V6) at (-2.75,-2.75);
    \coordinate (V7) at (0,-2.75);
    \coordinate (V8) at (2.75,-2.75);
    
    \draw[ultra thick] (V0) -- (V2);
    \draw[ultra thick] (V3) -- (V5);
    \draw[ultra thick] (V6) -- (V8);
    
    \draw[red,ultra thick] (V0) -- (V6);
    \draw[ultra thick] (V1) -- (V7);
    \draw[ultra thick] (V2) -- (V8);
    
    \draw[ultra thick,fill=white] (V0) circle [radius=1];
    \draw[ultra thick,fill=white] (V1) circle [radius=1];
    \draw[ultra thick,fill=white] (V2) circle [radius=1];
    \draw[ultra thick,fill=white] (V3) circle [radius=1];
    \draw[ultra thick,fill=white] (V4) circle [radius=1];
    \draw[ultra thick,fill=white] (V5) circle [radius=1];
    \draw[ultra thick,fill=white] (V6) circle [radius=1];
    \draw[ultra thick,fill=white] (V7) circle [radius=1];
    \draw[ultra thick,fill=white] (V8) circle [radius=1];
    
    \node at (V0) {$XX$};
    \node at (V1) {$IX$};
    \node at (V2) {$XI$};
    \node at (V3) {$YY$};
    \node at (V4) {$ZX$};
    \node at (V5) {$XZ$};
    \node at (V6) {$ZZ$};
    \node at (V7) {$ZI$};
    \node at (V8) {$IZ$};
    
    \node at (-4.8,2.75) {\Large $E_0$};
    \node at (-4.8,0) {\Large $E_{10}$};
    \node at (-4.8,-2.75) {\Large $E_8$};
    
    \node at (-2.75,4.8) {\Large $E_{12}$};
    \node at (0,4.8) {\Large $E_6$};
    \node at (2.75,4.8) {\Large $E_2$};
    \end{tikzpicture}
    \begin{tikzpicture}[scale=0.4]
    \coordinate (V0) at (-2.75,2.75);
    \coordinate (V1) at (0,2.75);
    \coordinate (V2) at (2.75,2.75);
    \coordinate (V3) at (-2.75,0);
    \coordinate (V4) at (0,0);
    \coordinate (V5) at (2.75,0);
    \coordinate (V6) at (-2.75,-2.75);
    \coordinate (V7) at (0,-2.75);
    \coordinate (V8) at (2.75,-2.75);
    
    \draw[ultra thick] (V0) -- (V2);
    \draw[ultra thick] (V3) -- (V5);
    \draw[ultra thick] (V6) -- (V8);
    
    \draw[red,ultra thick] (V0) -- (V6);
    \draw[ultra thick] (V1) -- (V7);
    \draw[ultra thick] (V2) -- (V8);
    
    \draw[ultra thick,fill=white] (V0) circle [radius=1];
    \draw[ultra thick,fill=white] (V1) circle [radius=1];
    \draw[ultra thick,fill=white] (V2) circle [radius=1];
    \draw[ultra thick,fill=white] (V3) circle [radius=1];
    \draw[ultra thick,fill=white] (V4) circle [radius=1];
    \draw[ultra thick,fill=white] (V5) circle [radius=1];
    \draw[ultra thick,fill=white] (V6) circle [radius=1];
    \draw[ultra thick,fill=white] (V7) circle [radius=1];
    \draw[ultra thick,fill=white] (V8) circle [radius=1];
    
    \node at (V0) {$XX$};
    \node at (V1) {$XI$};
    \node at (V2) {$IX$};
    \node at (V3) {$YY$};
    \node at (V4) {$IY$};
    \node at (V5) {$YI$};
    \node at (V6) {$ZZ$};
    \node at (V7) {$XY$};
    \node at (V8) {$YX$};
    
    \node at (-4.8,2.75) {\Large $E_0$};
    \node at (-4.8,0) {\Large $E_4$};
    \node at (-4.8,-2.75) {\Large $E_{11}$};
    
    \node at (-2.75,4.8) {\Large $E_{12}$};
    \node at (0,4.8) {\Large $E_1$};
    \node at (2.75,4.8) {\Large $E_3$};
    \end{tikzpicture} \\
    \vspace{0.15cm}
    \begin{tikzpicture}[scale=0.4]
    \coordinate (V0) at (-2.75,2.75);
    \coordinate (V1) at (0,2.75);
    \coordinate (V2) at (2.75,2.75);
    \coordinate (V3) at (-2.75,0);
    \coordinate (V4) at (0,0);
    \coordinate (V5) at (2.75,0);
    \coordinate (V6) at (-2.75,-2.75);
    \coordinate (V7) at (0,-2.75);
    \coordinate (V8) at (2.75,-2.75);
    
    \draw[ultra thick] (V0) -- (V2);
    \draw[ultra thick] (V3) -- (V5);
    \draw[ultra thick] (V6) -- (V8);
    
    \draw[ultra thick] (V0) -- (V6);
    \draw[red,ultra thick] (V1) -- (V7);
    \draw[ultra thick] (V2) -- (V8);
    
    \draw[ultra thick,fill=white] (V0) circle [radius=1];
    \draw[ultra thick,fill=white] (V1) circle [radius=1];
    \draw[ultra thick,fill=white] (V2) circle [radius=1];
    \draw[ultra thick,fill=white] (V3) circle [radius=1];
    \draw[ultra thick,fill=white] (V4) circle [radius=1];
    \draw[ultra thick,fill=white] (V5) circle [radius=1];
    \draw[ultra thick,fill=white] (V6) circle [radius=1];
    \draw[ultra thick,fill=white] (V7) circle [radius=1];
    \draw[ultra thick,fill=white] (V8) circle [radius=1];
    
    \node at (V0) {$XX$};
    \node at (V1) {$YZ$};
    \node at (V2) {$ZY$};
    \node at (V3) {$IX$};
    \node at (V4) {$ZX$};
    \node at (V5) {$ZI$};
    \node at (V6) {$XI$};
    \node at (V7) {$XY$};
    \node at (V8) {$IY$};
    
    \node at (-4.8,2.75) {\Large $E_9$};
    \node at (-4.8,0) {\Large $E_6$};
    \node at (-4.8,-2.75) {\Large $E_1$};
    
    \node at (-2.75,4.8) {\Large $E_0$};
    \node at (0,4.8) {\Large $E_{13}$};
    \node at (2.75,4.8) {\Large $E_7$};
    \end{tikzpicture}
    \begin{tikzpicture}[scale=0.4]
    \coordinate (V0) at (-2.75,2.75);
    \coordinate (V1) at (0,2.75);
    \coordinate (V2) at (2.75,2.75);
    \coordinate (V3) at (-2.75,0);
    \coordinate (V4) at (0,0);
    \coordinate (V5) at (2.75,0);
    \coordinate (V6) at (-2.75,-2.75);
    \coordinate (V7) at (0,-2.75);
    \coordinate (V8) at (2.75,-2.75);
    
    \draw[ultra thick] (V0) -- (V2);
    \draw[ultra thick] (V3) -- (V5);
    \draw[ultra thick] (V6) -- (V8);
    
    \draw[ultra thick] (V0) -- (V6);
    \draw[red,ultra thick] (V1) -- (V7);
    \draw[ultra thick] (V2) -- (V8);
    
    \draw[ultra thick,fill=white] (V0) circle [radius=1];
    \draw[ultra thick,fill=white] (V1) circle [radius=1];
    \draw[ultra thick,fill=white] (V2) circle [radius=1];
    \draw[ultra thick,fill=white] (V3) circle [radius=1];
    \draw[ultra thick,fill=white] (V4) circle [radius=1];
    \draw[ultra thick,fill=white] (V5) circle [radius=1];
    \draw[ultra thick,fill=white] (V6) circle [radius=1];
    \draw[ultra thick,fill=white] (V7) circle [radius=1];
    \draw[ultra thick,fill=white] (V8) circle [radius=1];
    
    \node at (V0) {$YI$};
    \node at (V1) {$YZ$};
    \node at (V2) {$IZ$};
    \node at (V3) {$YY$};
    \node at (V4) {$ZX$};
    \node at (V5) {$XZ$};
    \node at (V6) {$IY$};
    \node at (V7) {$XY$};
    \node at (V8) {$XI$};
    
    \node at (-4.8,2.75) {\Large $E_5$};
    \node at (-4.8,0) {\Large $E_{10}$};
    \node at (-4.8,-2.75) {\Large $E_1$};
    
    \node at (-2.75,4.8) {\Large $E_4$};
    \node at (0,4.8) {\Large $E_{13}$};
    \node at (2.75,4.8) {\Large $E_2$};
    \end{tikzpicture}
    \begin{tikzpicture}[scale=0.4]
    \coordinate (V0) at (-2.75,2.75);
    \coordinate (V1) at (0,2.75);
    \coordinate (V2) at (2.75,2.75);
    \coordinate (V3) at (-2.75,0);
    \coordinate (V4) at (0,0);
    \coordinate (V5) at (2.75,0);
    \coordinate (V6) at (-2.75,-2.75);
    \coordinate (V7) at (0,-2.75);
    \coordinate (V8) at (2.75,-2.75);
    
    \draw[ultra thick] (V0) -- (V2);
    \draw[ultra thick] (V3) -- (V5);
    \draw[ultra thick] (V6) -- (V8);
    
    \draw[ultra thick] (V0) -- (V6);
    \draw[red,ultra thick] (V1) -- (V7);
    \draw[ultra thick] (V2) -- (V8);
    
    \draw[ultra thick,fill=white] (V0) circle [radius=1];
    \draw[ultra thick,fill=white] (V1) circle [radius=1];
    \draw[ultra thick,fill=white] (V2) circle [radius=1];
    \draw[ultra thick,fill=white] (V3) circle [radius=1];
    \draw[ultra thick,fill=white] (V4) circle [radius=1];
    \draw[ultra thick,fill=white] (V5) circle [radius=1];
    \draw[ultra thick,fill=white] (V6) circle [radius=1];
    \draw[ultra thick,fill=white] (V7) circle [radius=1];
    \draw[ultra thick,fill=white] (V8) circle [radius=1];
    
    \node at (V0) {$IZ$};
    \node at (V1) {$YZ$};
    \node at (V2) {$YI$};
    \node at (V3) {$ZI$};
    \node at (V4) {$ZX$};
    \node at (V5) {$IX$};
    \node at (V6) {$ZZ$};
    \node at (V7) {$XY$};
    \node at (V8) {$YX$};
    
    \node at (-4.8,2.75) {\Large $E_5$};
    \node at (-4.8,0) {\Large $E_6$};
    \node at (-4.8,-2.75) {\Large $E_{11}$};
    
    \node at (-2.75,4.8) {\Large $E_8$};
    \node at (0,4.8) {\Large $E_{13}$};
    \node at (2.75,4.8) {\Large $E_3$};
    \end{tikzpicture} \\
    \vspace{0.15cm}
    \begin{tikzpicture}[scale=0.4]
    \coordinate (V0) at (-2.75,2.75);
    \coordinate (V1) at (0,2.75);
    \coordinate (V2) at (2.75,2.75);
    \coordinate (V3) at (-2.75,0);
    \coordinate (V4) at (0,0);
    \coordinate (V5) at (2.75,0);
    \coordinate (V6) at (-2.75,-2.75);
    \coordinate (V7) at (0,-2.75);
    \coordinate (V8) at (2.75,-2.75);
    
    \draw[ultra thick] (V0) -- (V2);
    \draw[ultra thick] (V3) -- (V5);
    \draw[ultra thick] (V6) -- (V8);
    
    \draw[ultra thick] (V0) -- (V6);
    \draw[ultra thick] (V1) -- (V7);
    \draw[red,ultra thick] (V2) -- (V8);
    
    \draw[ultra thick,fill=white] (V0) circle [radius=1];
    \draw[ultra thick,fill=white] (V1) circle [radius=1];
    \draw[ultra thick,fill=white] (V2) circle [radius=1];
    \draw[ultra thick,fill=white] (V3) circle [radius=1];
    \draw[ultra thick,fill=white] (V4) circle [radius=1];
    \draw[ultra thick,fill=white] (V5) circle [radius=1];
    \draw[ultra thick,fill=white] (V6) circle [radius=1];
    \draw[ultra thick,fill=white] (V7) circle [radius=1];
    \draw[ultra thick,fill=white] (V8) circle [radius=1];
    
    \node at (V0) {$XX$};
    \node at (V1) {$YZ$};
    \node at (V2) {$ZY$};
    \node at (V3) {$XI$};
    \node at (V4) {$IZ$};
    \node at (V5) {$XZ$};
    \node at (V6) {$IX$};
    \node at (V7) {$YI$};
    \node at (V8) {$YX$};
    
    \node at (-4.8,2.75) {\Large $E_9$};
    \node at (-4.8,0) {\Large $E_2$};
    \node at (-4.8,-2.75) {\Large $E_3$};
    
    \node at (-2.75,4.8) {\Large $E_0$};
    \node at (0,4.8) {\Large $E_5$};
    \node at (2.75,4.8) {\Large $E_{14}$};
    \end{tikzpicture}
    \begin{tikzpicture}[scale=0.4]
    \coordinate (V0) at (-2.75,2.75);
    \coordinate (V1) at (0,2.75);
    \coordinate (V2) at (2.75,2.75);
    \coordinate (V3) at (-2.75,0);
    \coordinate (V4) at (0,0);
    \coordinate (V5) at (2.75,0);
    \coordinate (V6) at (-2.75,-2.75);
    \coordinate (V7) at (0,-2.75);
    \coordinate (V8) at (2.75,-2.75);
    
    \draw[ultra thick] (V0) -- (V2);
    \draw[ultra thick] (V3) -- (V5);
    \draw[ultra thick] (V6) -- (V8);
    
    \draw[ultra thick] (V0) -- (V6);
    \draw[ultra thick] (V1) -- (V7);
    \draw[red,ultra thick] (V2) -- (V8);
    
    \draw[ultra thick,fill=white] (V0) circle [radius=1];
    \draw[ultra thick,fill=white] (V1) circle [radius=1];
    \draw[ultra thick,fill=white] (V2) circle [radius=1];
    \draw[ultra thick,fill=white] (V3) circle [radius=1];
    \draw[ultra thick,fill=white] (V4) circle [radius=1];
    \draw[ultra thick,fill=white] (V5) circle [radius=1];
    \draw[ultra thick,fill=white] (V6) circle [radius=1];
    \draw[ultra thick,fill=white] (V7) circle [radius=1];
    \draw[ultra thick,fill=white] (V8) circle [radius=1];
    
    \node at (V0) {$IY$};
    \node at (V1) {$ZI$};
    \node at (V2) {$ZY$};
    \node at (V3) {$YY$};
    \node at (V4) {$ZX$};
    \node at (V5) {$XZ$};
    \node at (V6) {$YI$};
    \node at (V7) {$IX$};
    \node at (V8) {$YX$};
    
    \node at (-4.8,2.75) {\Large $E_7$};
    \node at (-4.8,0) {\Large $E_{10}$};
    \node at (-4.8,-2.75) {\Large $E_3$};
    
    \node at (-2.75,4.8) {\Large $E_4$};
    \node at (0,4.8) {\Large $E_6$};
    \node at (2.75,4.8) {\Large $E_{14}$};
    \end{tikzpicture}
    \begin{tikzpicture}[scale=0.4]
    \coordinate (V0) at (-2.75,2.75);
    \coordinate (V1) at (0,2.75);
    \coordinate (V2) at (2.75,2.75);
    \coordinate (V3) at (-2.75,0);
    \coordinate (V4) at (0,0);
    \coordinate (V5) at (2.75,0);
    \coordinate (V6) at (-2.75,-2.75);
    \coordinate (V7) at (0,-2.75);
    \coordinate (V8) at (2.75,-2.75);
    
    \draw[ultra thick] (V0) -- (V2);
    \draw[ultra thick] (V3) -- (V5);
    \draw[ultra thick] (V6) -- (V8);
    
    \draw[ultra thick] (V0) -- (V6);
    \draw[ultra thick] (V1) -- (V7);
    \draw[red,ultra thick] (V2) -- (V8);
    
    \draw[ultra thick,fill=white] (V0) circle [radius=1];
    \draw[ultra thick,fill=white] (V1) circle [radius=1];
    \draw[ultra thick,fill=white] (V2) circle [radius=1];
    \draw[ultra thick,fill=white] (V3) circle [radius=1];
    \draw[ultra thick,fill=white] (V4) circle [radius=1];
    \draw[ultra thick,fill=white] (V5) circle [radius=1];
    \draw[ultra thick,fill=white] (V6) circle [radius=1];
    \draw[ultra thick,fill=white] (V7) circle [radius=1];
    \draw[ultra thick,fill=white] (V8) circle [radius=1];
    
    \node at (V0) {$ZI$};
    \node at (V1) {$IY$};
    \node at (V2) {$ZY$};
    \node at (V3) {$IZ$};
    \node at (V4) {$XI$};
    \node at (V5) {$XZ$};
    \node at (V6) {$ZZ$};
    \node at (V7) {$XY$};
    \node at (V8) {$YX$};
    
    \node at (-4.8,2.75) {\Large $E_7$};
    \node at (-4.8,0) {\Large $E_2$};
    \node at (-4.8,-2.75) {\Large $E_{11}$};
    
    \node at (-2.75,4.8) {\Large $E_8$};
    \node at (0,4.8) {\Large $E_1$};
    \node at (2.75,4.8) {\Large $E_{14}$};
    \end{tikzpicture}
    \caption{The nine remaining operator magic squares.}
    \label{fig:oms}
\end{figure}

\pagebreak
\section{Satisfiability program} \label{appx:sat}

Below is the Python script used with Python 3.11.1 to determine the maximum number of doily game equations satisfiable by a fixed bit assignment. The script is used to prove \Autoref{thm:eqn_sat}.

\begin{minted}{Python}
var_names = [
    "ix", "iy", "iz", "xi", "xx",
    "xy", "xz", "yi", "yx", "yy",
    "yz", "zi", "zx", "zy", "zz",
]
equations = [
    ("ix", "xi", "xx", 0),
    ("iy", "xi", "xy", 0),
    ("iz", "xi", "xz", 0),
    ("ix", "yi", "yx", 0),
    ("iy", "yi", "yy", 0),
    ("iz", "yi", "yz", 0),
    ("ix", "zi", "zx", 0),
    ("iy", "zi", "zy", 0),
    ("iz", "zi", "zz", 0),
    ("yz", "zy", "xx", 0),
    ("zx", "xz", "yy", 0),
    ("xy", "yx", "zz", 0),
    ("xx", "yy", "zz", 1),
    ("xy", "yz", "zx", 1),
    ("xz", "yx", "zy", 1),
]

max_sat = 0
for j in range(1 << len(var_names)):
    # get bit assignment from integer
    vals = {}
    for k, name in enumerate(var_names):
        vals[name] = (j >> k) % 2
    # count satisfied equations
    sat = 0
    for var1, var2, var3, result in equations:
        if (vals[var1] + vals[var2] + vals[var3]) % 2 == result:
            sat += 1
    # check if new max
    if sat > max_sat:
        max_sat = sat

print("Maximum number of satisfied equations: ", max_sat)
\end{minted}
\end{document}